\documentclass{lipics-v2021-arxiv}


%
\usepackage[appendix=append,bibliography=common]{apxproof}

\usepackage{enumerate}
\usepackage{mathtools}
\usepackage{nicefrac}
\usepackage{tikz}
\usetikzlibrary{automata,arrows,positioning}
\usepackage{bm}
\usepackage[all]{xy}
\usepackage{url}
\usepackage[most]{tcolorbox}
\usepackage{subcaption}

\usepackage[Algorithmus]{algorithm}
\usepackage[end]{algpseudocode}

\allowdisplaybreaks

\usepackage[framemethod=tikz]{mdframed}
\newmdenv[innerlinewidth=0.2pt,
roundcorner=4pt,linecolor=black,backgroundcolor=orange!70,
innerleftmargin=6pt,
innerrightmargin=6pt,innertopmargin=6pt,innerbottommargin=6pt]{mybox}

\usepackage{amssymb,amsmath,stmaryrd}

\usepackage{mathabx}

\newtheoremrep{theorem}{Theorem}[section]
\newtheoremrep{lemma}[theorem]{Lemma}
\newtheoremrep{definition}[theorem]{Definition}
\newtheoremrep{proposition}[theorem]{Proposition}
\newtheoremrep{corollary}[theorem]{Corollary}
\newtheoremrep{example}[theorem]{Example}


\renewcommand{\int}{\ensuremath{\mathbb{Z}}}

\newcommand{\onehalf}{\nicefrac{1}{2}}


\newcommand{\one}{\ensuremath{1}}
\newcommand{\zero}{\ensuremath{0}}

\newcommand{\comp}[1]{\ensuremath{\overline{#1}}}

\newcommand{\inc}[2]{\iota_{#1}^{#2}}

\newcommand{\Pow}[1]{\ensuremath{\mathcal{P}(#1)}}
\newcommand{\Powf}[1]{\ensuremath{\mathcal{P}_{\mathit{fin}}(#1)}}

\newcommand{\Ytop}[2]{[{#1}]^{#2}}

\newcommand{\interval}[2]{\ensuremath{[{#1}, {#2}]}}

\newcommand{\len}[1]{\ensuremath{|{#1}|}}

\newcommand{\norm}[1]{\ensuremath{|\!|{#1}|\!|}}

\newcommand{\Min}{\textsf{Min}}
\newcommand{\Max}{\textsf{Max}}

\newcommand{\mins}{\min\nolimits}
\newcommand{\maxs}{\max\nolimits}


\makeatletter
\newcommand{\superimpose}[2]{%
  {\ooalign{$#1\@firstoftwo#2$\cr\hfil$#1\@secondoftwo#2$\hfil\cr}}}
\makeatother

\newcommand{\minimp}[1]{\ensuremath{imp_{\min}}({#1})}
\newcommand{\minimps}[1]{\ensuremath{imp^s_{\min}}({#1})}
\newcommand{\maximp}[1]{\ensuremath{imp_{\max}}({#1})}
\newcommand{\maximps}[1]{\ensuremath{imp^s_{\max}}({#1})}



\newcommand{\monM}{\mathbb{M}}

\newcommand{\latL}{\mathbb{L}}

\definecolor{dmagenta}{rgb}{0.81,0,0.81}
\definecolor{dcyan}{rgb}{0,0.6,0.6}
\definecolor{dgreen}{rgb}{0.09, 0.45, 0.27}
\definecolor{dred}{rgb}{0.55, 0.0, 0.0}

\newcommand{\blue}{\color{blue}}
\newcommand{\red}{\color{dred}}

\newcommand{\pb}[1]{{\blue #1}}


\newcommand{\mystrutab}{\raisebox{0ex}[0.4cm]{}}
\newcommand{\mystrutbl}{\raisebox{-2ex}[0.4cm]{}}

\newcommand{\mathup}[1]{\text{}}
\newcommand{\state}[1]{\ensuremath{\bm{#1}}}

\title{A Lattice-Theoretical View of Strategy Iteration}
\titlerunning{A Lattice-Theoretical View of Strategy Iteration}

\author{Paolo Baldan}{Universit\`a di Padova, Italy}{baldan@math.unipd.it}{https://orcid.org/0000-0001-9357-5599}{}
\author{Richard Eggert}{University of Duisburg-Essen, Germany}{richard.eggert@uni-due.de}{https://orcid.org/0000-0002-9901-7392}{}
\author{Barbara K\"onig}{University of Duisburg-Essen, Germany}{barbara\_koenig@uni-due.de}{https://orcid.org/0000-0002-4193-2889}{}
\author{Tommaso Padoan}{Universit\`a di Padova, Italy}{padoan@math.unipd.it}{https://orcid.org/0000-0001-7814-1485}{}

\authorrunning{P. Baldan and R. Eggert and B. K\"onig and T. Padoan}

\funding{Partially supported by the DFG project SpeQt and by the Ministero dell’Universt\`a e della Ricerca Scientifica of Italy, under Grant No. 201784YSZ5, PRIN2017 -- ASPRA.}

\Copyright{Paolo Baldan and Richard Eggert and Barbara K\"onig and Tommaso Padoan}

\keywords{games, strategy iteration, fixpoints, energy games,
  behavioural metrics}

\ccsdesc[500]{Theory of computation~Program verification}
\ccsdesc[500]{Theory of computation~Solution concepts in game theory}

\nolinenumbers 

\begin{document}

\maketitle
\begin{abstract}
  Strategy iteration is a technique frequently used for two-player
  games in order to determine the winner or compute payoffs, but to
  the best of our knowledge no general framework for strategy
  iteration has been considered. Inspired by previous work on simple
  stochastic games, we propose a general formalisation of strategy
  iteration for solving least fixpoint equations over a suitable class
  of complete lattices, based on MV-chains. We devise algorithms that
  can be used for non-expansive fixpoint functions represented as
  so-called min- respectively max-decompositions. Correspondingly, we
  develop two different techniques: strategy iteration from above,
  which has to solve the problem that iteration might reach a fixpoint
  that is not the least, and from below, which is algorithmically
  simpler, but requires a more involved correctness argument. We apply
  our method to solve energy games and compute behavioural metrics for
  probabilistic automata.
\end{abstract}

\section{Introduction}
\label{se:introduction}

Strategy iteration (or policy iteration) is a well known technique in computer
science. It has been widely adopted for the solution of two-player
games where the players, say {\Max} and {\Min}, aim at maximising and
minimising, respectively, some payoff.
In many cases there exists an optimal strategy for each player where
no deviation is advisable as long as the other player plays optimally.
We here assume a scenario where memoryless
(or positional) strategies are sufficient.
The general idea of strategy iteration is to iteratively fix a
strategy for one player, compute the optimal answering strategy for
the other player and then improve the strategy of the first player. As
long as there are only finitely many strategies, an optimal strategy
is bound to be found at some point. Such strategy iteration
  methods exist for Markov decision processes~\cite{How:DPMP} and for
  a variety of games, such as simple stochastic games \cite{condon92,hk:nonterminating-stochastic-games,ABS:GSIM},
  (discounted) mean-payoff games~\cite{ZP:CMPGG,BrimChal:2010} and parity
  games~\cite{VJ:DSIM,Sche:OSMPG}.

Similar ideas apply also to a wide range of different problems. For
instance, the computation of behavioural distances for systems
embodying quantitative information, e.g., time, probability or cost,
is often based on some form of lifting of distances on
states~\cite{bblm:on-the-fly-exact-journal,bbkk:coalgebraic-behavioral-metrics,bkp:up-to-behavioural-metrics-fibrations}. In
turn the lifting relies on couplings which play the role of strategies
and algorithms
based on a progressive improvement of couplings have been
devised~\cite{bblm:on-the-fly-exact-journal,bblmtb:prob-bisim-distance-automata-journal}.

\subparagraph*{Motivating example.}
To help the intuition, we
review simple stochastic
games (SSGs)~\cite{condon92} and strategy iteration in that
setting as discussed in~\cite{bekp:fixpoint-theory-upside-down,DBLP:journals/corr/abs-2101-08184}.
An SSG consists of a set of states $V$, partitioned in four subsets
$\mathit{MIN}$, $\mathit{MAX}$, $\mathit{AV}$ and $\mathit{SINK}$.
States in $\mathit{SINK}$ (sink states) have no successor and yield a
payoff in $\interval{0}{1}$.  For states in $\mathit{AV}$ (average
states) the successor is determined by a probability distribution over
$V$, i.e., intuitively, the environment makes a probabilistic
choice. In a state in $\mathit{MIN}$, player {\Min} chooses a
successor trying to minimise the expected payoff, while in a state in
$\mathit{MAX}$, it is the player {\Max} that chooses, with the aim of
maximising the expected payoff. An example of an SSG is
in Fig.~\ref{fi:ssg-example} on
page~\pageref{fi:ssg-example}.

When {\Min} and {\Max} play optimally, the expected payoff at each
state is given by the least fixpoint of the function
$\mathcal{V}\colon [0,1]^V\to [0,1]^V$, defined for
$a\colon V\to [0,1]$ and $v \in V$ by
\begin{align*}
  \mathcal{V}(a)(v) =
  \small
  \begin{cases}
    \max_{v\to v'} a(v') &v\in \mathit{MAX} \\
    \min_{v\to v'} a(v') &v\in \mathit{MIN} \\
    \sum_{v'\in V} p(v)(v')\cdot a(v') &v\in \mathit{AV} \\
    c(v) & v\in \mathit{SINK}
  \end{cases}
\end{align*}
with $p(v)(v')$
the probability of state $v$ reaching
$v'$ and $c(v)\in [0,1]$ the payoff of sink state $v$.

The idea of strategy iteration from below, instantiated to this
context, is to compute the least fixpoint $\mu \mathcal{V}$ via an iteration of
the following kind:
\begin{enumerate}

\item Guess a strategy $\sigma \colon \mathit{MAX}\to V$ for player
  {\Max}, i.e., fix a successor for states in $\mathit{MAX}$.

\item Compute the least fixpoint of
  $\mathcal{V}_\sigma : [0,1]^V\to [0,1]^V$, which is defined
  as $\mathcal{V}$ in all cases apart from
  $v\in \mathit{MAX}$, where we set
  $\mathcal{V}(a)(v) = a(\sigma (v))$.  This fixpoint computation is
  simpler than the original one and it can be done efficiently via
  linear programming.

\item Based on $\mu \mathcal{V}_\sigma$, try to improve the strategy
  for {\Max}. If the strategy does not change, we have computed a
  fixpoint of $\mathcal{V}$ and, since iteration is from below, this is necessarily the least fixpoint. If the strategy changes, continue with
  step~2.
\end{enumerate}
 
A similar approach can be used for converging to the least fixpoint
from above. In this case, it is now player {\Min} who fixes a strategy
which is progressively improved.
This procedure is well-known to work for stopping
games~\cite{condon92}, i.e., SSGs where each combination of strategies
ensures termination, since for these games $\mathcal{V}$ has a unique
fixpoint.  However, in general, when iterating from above the procedure may get stuck at some
fixpoint which is not the least fixpoint of $\mathcal{V}$, a problem
which is solved by the theory developed
in~\cite{bekp:fixpoint-theory-upside-down} which can be used to
``skip'' this fixpoint and continue the iteration from there.




\medskip

While, as explained above, the general idea of strategy iteration is used in many different settings, to the best of our knowledge a general definition of strategy iteration is still missing.
%
The goal of the present
paper is to provide a general and abstract formulation of an algorithm
for strategy iteration, proved correct once and for all, which
instantiates to a variety of problems.
%
%
The key observation is that optimal strategies very often arise from
some form of extremal (least or greatest) fixpoint of a suitable
non-expansive function $f$ over a complete MV-chain~\cite{Mun:MV}, the paradigmatic example being the real interval $[0,1]$ with the usual order.
We propose a framework where the operation of fixing a strategy for
one of the players is captured abstractly, in terms of so-called min-
or max-decompositions of the function of interest. Then, we devise
strategy iteration approaches which converge to the fixpoint of
interest by successively improving the strategy for the chosen
player. We will assume that the interest is in least fixpoints, but
the theory can be dualised.  We propose two strategy iteration
algorithms that converge to the least fixpoint ``from below'' and
``from above'', respectively. As it happens for SSGs, in the latter
case the iteration can reach a fixpoint which is not the
least. Clearly, whenever the function $f$ of interest has a unique
fixpoint this problem disappears. Moreover, in some cases, even though
$f$ has multiple fixpoints, it can be ``patched'' in a way that the
modified function has the fixpoint of interest
as
its only fixpoint.
%
Otherwise, we can rely
on the results in~\cite{bekp:fixpoint-theory-upside-down} to check
whether the reached fixpoint is the least one and whenever it is not,
to get closer to the desired fixpoint and continue the
iteration.


Strategy iteration approaches can be slow if compared to other
algorithms, such as value iteration. However, the benefit of strategy
iteration algorithms is that they allow an exact computation of the
desired fixpoint, while other algorithms may never reach the
sought-after extreme fixpoint but only converge towards it. This is
the case, e.g., for simple stochastic games, where strategy iteration
algorithms are the standard methods to obtain exact results.
Additionally, strategy iteration, besides determining
the fixpoint also singles out a strategy which allows one to obtain
it, an information which is often of interest.

In summary, we propose the first, to the best of our knowledge,
general definition of strategy iteration providing a lattice-theoretic
formalisation of this technique. This requires to single out and solve
in this general setting the fundamental challenges of these
approaches, which already show up in earlier work on SSGs (see,
e.g.,~\cite{bekp:fixpoint-theory-upside-down,BrimChal:2010}). In the
iteration from above, we may converge to a fixpoint that is not the
least, while from below it is not straightforward to show that
improving the strategy of {\Max} leads to a larger fixpoint.

Known algorithms are rediscovered for SSGs
and probabilistic automata~\cite{bblmtb:prob-bisim-distance-automata-journal}. Moreover new ones are obtained for energy
games~\cite{BFLMS:2008,CdAHS:2003} where movements in the game graph have an energy cost and the goal of one of the players is to avoid that the energy drops below zero.
Given the number of different application domains where
strategy iteration is or can be used, we feel that a general framework
can unveil unexplored potentials.
The two case studies (energy games and behavioural metrics) that we
treat can be encoded into SSGs~\cite{bblmtb:prob-bisim-distance-automata-journal}, but the obtained
strategies have to be translated back to the original setting, which
is not always trivial in general, and encodings usually come with a
loss of efficiency. For instance, in order
to solve SSGs a solver for linear programming is usually required,
which is in general not needed for other applications.

The rest of the paper is structured as follows. In
\S\ref{ss:lattices} we review some order-theoretic notions
and recap some results
from~\cite{bekp:fixpoint-theory-upside-down} for identifying least
and greatest fixpoints.
In \S\ref{se:GSI} we devise two generalized strategy iteration
algorithms, from above and from below, using SSGs (already treated in
\cite{DBLP:journals/corr/abs-2101-08184}) as a running example. In
\S\ref{se:EG}, we show how our technique applies to energy games,
while in \S\ref{se:BM} we discuss an application to the computation of
the behavioural distance for probabilistic automata.  

The present paper is the full version of
\cite{bekp:lattice-strategy-iteration}.  Proofs and further material
can be found in the appendix.

\section{Preliminaries on ordered structures and fixpoints}
\label{ss:lattices}
\label{ss:upsidedown}

This section reviews some background used throughout the
paper. This includes the basics of lattices and MV-algebras,
where the functions of interest take values. We also recap some
results from~\cite{bekp:fixpoint-theory-upside-down} useful for detecting if a fixpoint of a given function is the least (or greatest).



For $X,Y$ sets, we denote by $\Pow{X}$ the powerset of $X$ and
$\Powf{X}$ the set of finite subsets of $X$. Moreover, the set
of functions from $X$ to $Y$ is denoted by either $Y^X$ or
$X\to Y$.

A partially ordered set $(P, \sqsubseteq)$
is often denoted simply as $P$, omitting the order
relation.
\pb{Given $x, y \in P$ we will denote by $\interval{x}{y}$ the interval $\{ z \in P \mid x \leq z \leq y \}$.}
For a function $f : X \to P$, we will write $\arg\min_{x\in X} f(x)$
to denote the set of elements where $f$ reaches the minimum, i.e.,
$\{ x \in X \mid \forall y \in X.\, f(x) \sqsubseteq f(y)\}$ and,
abusing the notation, we will write $z = \arg\min_{x\in X} f(x)$
instead of $z \in \arg\min_{x\in X} f(x)$.

The \emph{join} and the \emph{meet} of a
subset $X \subseteq P$ (if they exist) are denoted  $\bigsqcup X$
and $\bigsqcap X$.
 
A \emph{complete lattice} is a partially ordered set
$(\latL, \sqsubseteq)$ such that each subset $X \subseteq \latL$
admits a join $\bigsqcup X$ and a meet $\bigsqcap X$. A complete
lattice $(\latL, \sqsubseteq)$ always has a least element
$\bot = \bigsqcap \latL$ and a greatest element
$\top = \bigsqcup \latL$.

A function $f : \latL \to \latL$ is \emph{monotone} if for all
$l, l' \in \latL$, if $l \sqsubseteq l'$ then
$f(l) \sqsubseteq f(l')$. By Knaster-Tarski's
theorem~\cite[Theorem~1]{t:lattice-fixed-point}, any monotone
function on a complete lattice has a least fixpoint $\mu f$, characterised as the meet
of all pre-fixpoints
$\mu f = \bigsqcap \{ l \mid f(l) \sqsubseteq l \}$
and, dually, a greatest fixpoint
$\nu f = \bigsqcup \{ l \mid l \sqsubseteq f(l) \}$, characterised as
the join of all post-fixpoints.
We denote by $\mathit{Fix}(f)$ the set of all fixpoints of $f$.

Given a set $Y$ and a complete lattice $\latL$, the set of functions
$\latL^Y = \{ f \mid f : Y \to \latL \}$, endowed with pointwise
order, i.e., for $a, b \in \latL^Y$, $a \sqsubseteq b$ if
$a(y) \sqsubseteq b(y)$ for all $y\in Y$, is a complete lattice.
We write $a \sqsubset b$ when $a \sqsubseteq b$ and $a \neq b$, i.e., for
all $y \in Y$ we have $a(y) \sqsubseteq b(y)$ and
$a(y) \sqsubset b(y)$ for some $y \in Y$.

We are also interested in the set of probability distributions
$\mathcal{D}(Y)\subseteq [0,1]^Y$, i.e., functions
$\beta : Y \to [0,1]$ such that $\sum_{y \in Y} \beta(y) = 1$.

\medskip



  An \emph{MV-algebra}~\cite{Mun:MV} is a tuple
  $\monM = (M, \oplus, \zero, \comp{(\cdot)})$ where
  $(M, \oplus, \zero)$ is a commutative monoid and
  $\comp{(\cdot)} : M \to M$ maps each element to its
  \emph{complement}, such that for all $x, y \in M$
  \begin{enumerate}
  \item \label{de:mv:1}
    $\comp{\comp{x}} = x$

  \item \label{de:mv:2}
    $x \oplus \comp{\zero} = \comp{\zero}$
    
  \item \label{de:mv:3}
    $\comp{(\comp{x} \oplus y)} \oplus y = \comp{(\comp{y} \oplus x)} \oplus x$.
  \end{enumerate}
  We denote $\one = \comp{\zero}$
  and subtraction
  $x \ominus y = \comp{\comp{x} \oplus y}$.

  MV-algebras are endowed with a partial order, the so-called
\emph{natural order}, defined for $x,y\in M$, by $x \sqsubseteq y$
if $x \oplus z= y$ for some $z \in M$. When $\sqsubseteq$ is total,
$\monM$ is called an \emph{MV-chain}. We will write $\monM$ instead of
$M$.

The natural order gives an MV-algebra a lattice structure where
$\bot = \zero$, $\top =\one$, $x \sqcup y = (x \ominus y) \oplus y$
and
$x \sqcap y = \comp{\comp{x} \sqcup \comp{y}} = x \ominus (x \ominus y)$. We call the MV-algebra \emph{complete} if it is a
complete lattice, which is not true in general, e.g.,
$([0,1] \cap \mathbb{Q}, \leq)$.

\begin{example}
  \label{ex:mv-chains}
  A prototypical example of an MV-algebra is
  $([0,1],\oplus,0,\comp{(\cdot)})$ where $x\oplus y = \min\{x+y,1\}$,
  $\comp{x} = 1-x$ and $x\ominus y = \max\{0,x-y\}$ for $x,y\in [0,1]$. 
  The natural order is $\le$ (less
  or equal) on the reals.
  Another example is $K = (\{0,\dots,k\},\oplus,0,\comp{(\cdot)})$ where
  $n\oplus m = \min\{n+m,k\}$, $\comp{n} = k-n$ and $n\ominus m = \max\{ n-m,0\}$ for
  $n,m\in \{0,\dots,k\}$. Both MV-algebras are complete and MV-chains.
\end{example}

We next briefly recap the theory
from~\cite{bekp:fixpoint-theory-upside-down} which will be helpful in
the paper for checking whether a fixpoint is the least
or the greatest
fixpoint of some underlying endo-function.

\begin{remark}
  Hereafter, unless stated otherwise, $Y,Z$ will be assumed to be
  finite sets and $\monM$ will be a complete MV-chain.
\end{remark}

Given $a \in \monM^Y$ we define its \emph{norm} as
$\norm{a} = \max \{ a(y) \mid y \in Y\}$.  A function
$f: \monM^Y\to \monM^Z$ is \emph{non-expansive} if for all
$a, b \in \monM^Y$ it holds
$\norm{f(b) \ominus f(a)} \sqsubseteq \norm{b \ominus a}$. It can be
seen that non-expansive functions are monotone. A number of standard
operators are non-expansive (e.g., constants, reindexing, max and min
over a relation, average), and non-expansiveness is preserved by
composition and disjoint union
\pb{(see Appendix~\ref{se:decomposition-approximation-f} and~\cite{bekp:fixpoint-theory-upside-down})}. Given
$Y' \subseteq Y$ and $\delta \in \monM$, we write $\delta_{Y'}$ for
the function defined by $\delta_{Y'}(y) = \delta$ if $y \in Y'$ and
$\delta_{Y'}(y) = \zero$, otherwise.

For a non-expansive endo-function $f: \monM^Y\to \monM^Y$ and
$a \in \monM^Y$, the theory in~\cite{bekp:fixpoint-theory-upside-down}
provides  a so-called $a$-approximation $f_\#^a$ of $f$, which is an
endo-function over a suitable subset of $Y$.
More precisely, define $[Y]^a = \{ y\in Y\mid a(y) \neq 0 \}$ and
$\delta^a = \min \{ a(y)\mid y\in [Y]^a\}$. For $0\sqsubset\delta \in \monM$ consider the functions
$\alpha^{a,\delta} : \Pow{\Ytop{Y}{a}} \to \interval{a \ominus
  \delta}{a}$ and
$\gamma^{a,\delta} : \interval{a \ominus \delta}{a} \to
\Pow{\Ytop{Y}{a}}$, defined, for $Y' \in \Pow{\Ytop{Y}{a}}$ and
$b \in \interval{a \ominus \delta}{a}$, by
\begin{center}
  $
  \alpha^{a,\delta}(Y') = a \ominus \delta_{Y'} \qquad  \qquad
  \gamma^{a,\delta}(b) = \{ y\in \Ytop{Y}{a} \mid a(y) \ominus b(y)
  \sqsupseteq \delta \}.
  $
\end{center}

For a non-expansive function $f: \monM^Y\to \monM^Z$ and
$\delta \in \monM$, define
$f^{a,\delta}_\# \colon \Pow{\Ytop{Y}{a}} \to \Pow{\Ytop{Z}{f(a)}}$
as
$f_\#^{a,\delta} = \gamma^{f(a),\delta} \circ f \circ
\alpha^{a,\delta}$. The function $f_\#^{a,\delta}$ is antitone in
the parameter $\delta$ and there exists a suitable value
$\inc{f}{a} \sqsupset 0$,
such that all
functions $f_\#^{a,\delta}$ for
$0\sqsubset \delta \sqsubseteq \inc{f}{a}$ are equal. The function
$f_\#^a := f_\#^{a,\inc{f}{a}}$ is called the
\emph{$a$-approximation} of $f$. When $\delta \sqsubseteq \delta_a$, the pair $\langle \alpha^{a,\delta}, \gamma^{a,\delta}\rangle$ is a Galois connection, a notion at the heart of abstract
interpretation~\cite{cc:ai-unified-lattice-model,CC:TLA}, and $f^{a,\delta}_\# = \gamma^{f(a),\delta} \circ f \circ \alpha^{a,\delta}$
is the best correct approximation of $f$.

Intuitively, given some $Y'$, the set $f_\#^a(Y')$ contains the points where a decrease of the values of $a$ on the points in $Y'$ ``propagates'' through the function $f$.
The greatest fixpoint of $f_\#^a$
gives us the subset of $Y$ where such a decrease is propagated in a
cycle (so-called ``vicious cycle''). Whenever $\nu f_\#^a$ is
non-empty, one can argue that $a$ cannot be the least fixpoint of $f$ since
we can decrease the value in all elements of $\nu f_\#^a$, obtaining a
smaller prefixpoint. Interestingly, for non-expansive functions, it is shown in~\cite{bekp:fixpoint-theory-upside-down} that also the converse holds, i.e., emptiness of the greatest fixpoint of $f_\#^a$ implies
that $a$ is the least fixpoint.

\begin{theorem}[soundness and completeness for fixpoints]
  \label{th:fixpoint-sound-compl}
  Let $\monM$ be a complete
  MV-chain, $Y$ a finite set and $f : \monM^Y\to \monM^Y$ be a
  non-expansive function.
  Let $a \in \monM^Y$ be a fixpoint of $f$. Then 
  $\nu f^a_{\#} = \emptyset$ if and only if $a = \mu f$.
\end{theorem}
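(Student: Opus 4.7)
The plan is to prove each implication separately. Set $\delta := \inc{f}{a}$ throughout, so $f^a_\# = f^{a,\delta}_\#$ and $\langle \alpha^{a,\delta}, \gamma^{a,\delta}\rangle$ is a Galois connection between $\Pow{\Ytop{Y}{a}}$ and $I := \interval{a \ominus \delta}{a}$. The easier direction, $\nu f^a_\# \neq \emptyset \Rightarrow a \neq \mu f$, constructs a strict pre-fixpoint of $f$ from a non-empty fixpoint of $f^a_\#$; the harder direction, $a \neq \mu f \Rightarrow \nu f^a_\# \neq \emptyset$, exploits non-expansiveness to produce a non-empty post-fixpoint of $f^a_\#$.

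For the easier direction, let $Y' := \nu f^a_\#$ and $b := \alpha^{a,\delta}(Y') = a \ominus \delta_{Y'}$. Since $Y' \neq \emptyset$ and $\delta \sqsupset \zero$, $b \sqsubset a$. I verify $f(b) \sqsubseteq b$: using $f(a) = a$, the equation $Y' = f^a_\#(Y')$ unfolds to $Y' = \{y \in \Ytop{Y}{a} : a(y) \ominus f(b)(y) \sqsupseteq \delta\}$, whence $f(b)(y) \sqsubseteq a(y) \ominus \delta = b(y)$ for $y \in Y'$; for $y \in \Ytop{Y}{a} \setminus Y'$, $b(y) = a(y)$ and monotonicity yields $f(b)(y) \sqsubseteq f(a)(y) = b(y)$; for $y \notin \Ytop{Y}{a}$ both sides are $\zero$. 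Hence $\mu f \sqsubseteq b \sqsubset a$.

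For the harder direction, assume $c := \mu f \sqsubset a$. The crucial structural fact is that non-expansiveness makes $I$ an $f$-invariant complete sub-lattice: for $b \in I$, $\norm{a \ominus b} \sqsubseteq \delta$ yields by non-expansiveness $\norm{a \ominus f(b)} = \norm{f(a) \ominus f(b)} \sqsubseteq \delta$, so $f(b) \sqsupseteq a \ominus \delta$; combined with $f(b) \sqsubseteq f(a) = a$ by monotonicity, $f(b) \in I$. By Knaster-Tarski, $f|_I$ admits a least fixpoint $a^* \in I$. Taking $Y' := \gamma^{a,\delta}(a^*)$, one notes that $a^*(y) = a(y) \ominus \delta$ for each $y \in Y'$ (since $y \in \gamma^{a,\delta}(a^*)$ forces $a^*(y) \sqsubseteq a(y) \ominus \delta$ while $a^* \in I$ forces $a^*(y) \sqsupseteq a(y) \ominus \delta$), so $\alpha^{a,\delta}(Y')$ and $a^*$ agree on $Y'$.

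The main obstacle is to show that $Y'$ is non-empty and a post-fixpoint of $f^a_\#$, equivalently that $\alpha^{a,\delta}(Y')$ is a pre-fixpoint of $f$ with $Y' \neq \emptyset$. For non-emptiness, if $\norm{a \ominus a^*} \sqsubset \delta$ I would iteratively refine by replacing $\delta$ with $\delta' := \norm{a \ominus a^*}$ and restarting in the strictly smaller invariant interval $\interval{a \ominus \delta'}{a}$, obtaining a decreasing chain of least fixpoints bounded below by $c$; since the chain cannot collapse to $a$ (as $c \sqsubset a$ together with the non-expansive structure ensures enough intermediate fixpoints), it stabilises at some $a^*_\infty$ attaining $\norm{a \ominus a^*_\infty} = \delta_\infty \sqsupset \zero$, and $\delta_\infty \sqsubseteq \inc{f}{a}$ guarantees $f^{a,\delta_\infty}_\# = f^a_\#$. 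For the post-fixpoint property, the facts that $\alpha^{a,\delta_\infty}(Y')$ and $a^*_\infty$ agree on $Y'$ while $\norm{\alpha^{a,\delta_\infty}(Y') \ominus a^*_\infty} \sqsubset \delta_\infty$ overall, combined with non-expansiveness, monotonicity, and the fixpoint identity $f(a^*_\infty) = a^*_\infty$, bound $f(\alpha^{a,\delta_\infty}(Y'))$ from above on $Y'$ tightly enough to yield the required pre-fixpoint inequality, giving $Y' \subseteq f^a_\#(Y')$ and hence $\nu f^a_\# \supseteq Y' \neq \emptyset$.
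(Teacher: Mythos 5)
Your first direction is correct: from a non-empty fixpoint $Y'$ of $f^a_\#$ you correctly unfold $Y' = \gamma^{a,\delta}(f(a\ominus\delta_{Y'}))$ (using $f(a)=a$) to show $a\ominus\delta_{Y'}$ is a pre-fixpoint strictly below $a$, so $a\neq\mu f$. This is exactly the soundness argument the paper relies on (it is the content of Lemma~\ref{lem:fp-increase}; note the paper does not prove Theorem~\ref{th:fixpoint-sound-compl} itself but imports it from the cited prior work, so the comparison below is against that proof).

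The completeness direction, however, has a genuine gap at its crux, in two places. First, non-emptiness of $Y'=\gamma^{a,\delta}(a^*)$ requires that the least fixpoint $a^*$ of $f$ on $\interval{a\ominus\delta}{a}$ satisfies $a^*\sqsubset a$ with $\norm{a\ominus a^*}$ actually attaining $\delta$ (or your refinement stabilising at some $\delta_\infty\sqsupset 0$). You assert that ``the chain cannot collapse to $a$'' because ``the non-expansive structure ensures enough intermediate fixpoints'', but this is precisely the statement being proved: the existence of a fixpoint of $f$ in $\interval{a\ominus\delta}{a}\setminus\{a\}$ is equivalent (via your own soundness direction) to $\nu f^a_\#\neq\emptyset$. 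Moreover, in a complete MV-chain a strictly decreasing sequence of $\delta$'s need not stabilise at a positive value. Second, even granting $\norm{a\ominus a^*}=\delta$, your post-fixpoint step does not go through: with $b=a\ominus\delta_{Y'}$ and $\delta'=\norm{b\ominus a^*}$, non-expansiveness only yields $f(b)(y)\sqsubseteq a^*(y)\oplus\delta'$, hence $a(y)\ominus f(b)(y)\sqsupseteq\delta\ominus\delta'$ for $y\in Y'$; since in general $0\sqsubset\delta'\sqsubset\delta$, this does not give the required threshold $\delta$, so you cannot conclude $Y'\subseteq f^{a,\delta}_\#(Y')$. The actual proof avoids both problems by anchoring at $c=\mu f$ itself rather than at $a^*$: take $\theta=\norm{a\ominus c}\sqsupset 0$, let $Y'$ be the set where $a\ominus c$ attains $\theta$, let $\theta'$ be the second-largest value of $a\ominus c$, and choose $0\sqsubset\delta\sqsubseteq\min\{\inc{f}{a},\theta\ominus\theta'\}$. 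Then $\norm{(a\ominus\delta_{Y'})\ominus c}=\theta\ominus\delta$, and non-expansiveness applied at the genuine fixpoint $c$ gives $f(a\ominus\delta_{Y'})(y)\sqsubseteq c(y)\oplus(\theta\ominus\delta)=a(y)\ominus\delta$ for $y\in Y'$ — an exactly tight bound — so $Y'$ is a non-empty post-fixpoint of $f^{a,\delta}_\#=f^a_\#$. Your interval-invariance observation is correct and pleasant, but it does not substitute for this argument.
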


Using the above theorem we can check whether some fixpoint $a$ of $f$
is the least fixpoint. Whenever $a$ is a fixpoint, but not yet the
least fixpoint of $f$, it can be decreased
by a fixed value in $\monM$
(see \cite{bekp:fixpoint-theory-upside-down} for the details)
on the points in $\nu f^a_{\#}$
to obtain a smaller pre-fixpoint.

\begin{lemma}
  \label{lem:fp-increase}
  Let $\monM$ be a complete MV-chain, $f : \monM^Y\to \monM^Y$ a
  non-expansive function, $a \in \monM^Y$ a fixpoint of $f$, and let
  $f^a_{\#}$ be the corresponding $a$-approximation.
  If $a$ is not the least fixpoint and thus
  $\nu f^a_{\#} \neq \emptyset$ then there is
  $0 \sqsubset \delta \in \monM$ such that
  $a \ominus \delta_{\nu f^a_{\#}}$ is a pre-fixpoint of $f$.
\end{lemma}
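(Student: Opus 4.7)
The plan is to exhibit a pre-fixpoint explicitly. Set $Y' := \nu f^a_\#$ and $\delta := \inc{f}{a}$, which is strictly greater than $0$ by the definition of the $a$-approximation. Take $b := \alpha^{a,\delta}(Y') = a \ominus \delta_{Y'}$, and claim that $b$ is the desired pre-fixpoint, i.e., $f(b) \sqsubseteq b$. I would verify this pointwise by distinguishing whether $y \in Y'$ or not.

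For $y \notin Y'$ the argument is immediate: $b(y) = a(y)$, and since non-expansive functions are monotone (as noted in the excerpt), from $b \sqsubseteq a$ we get $f(b) \sqsubseteq f(a) = a$, so $f(b)(y) \sqsubseteq a(y) = b(y)$.

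For $y \in Y'$ the key ingredient is that $Y'$ is a fixpoint of $f^a_\#$. Unfolding the definition $f^a_\# = \gamma^{f(a),\delta} \circ f \circ \alpha^{a,\delta}$ and using $f(a) = a$, the equation $f^a_\#(Y') = Y'$ rewrites to $\gamma^{a,\delta}(f(b)) = Y'$. Hence every $y \in Y'$ satisfies the membership condition defining $\gamma^{a,\delta}(f(b))$, namely $a(y) \ominus f(b)(y) \sqsupseteq \delta$. From this I would conclude $f(b)(y) \sqsubseteq a(y) \ominus \delta = b(y)$.

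The only nontrivial point, and the place where I expect the bookkeeping to concentrate, is this last implication in the MV-chain. It follows from the standard residuation law $x \ominus y \sqsupseteq z \iff x \sqsupseteq y \oplus z$: monotonicity already yields $f(b)(y) \sqsubseteq f(a)(y) = a(y)$, so we can rewrite $a(y) \ominus f(b)(y) \sqsupseteq \delta$ as $a(y) \sqsupseteq f(b)(y) \oplus \delta$, and then by antitonicity of $\ominus$ in its second argument we obtain $a(y) \ominus \delta \sqsupseteq f(b)(y)$. Combining the two cases gives $f(b) \sqsubseteq b$ for the chosen $\delta \sqsupset 0$, which is the claimed pre-fixpoint property.
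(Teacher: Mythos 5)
Your argument is essentially correct, and it is worth noting that the paper does not actually prove this lemma: it is imported as a ``black box'' from \cite{bekp:fixpoint-theory-upside-down}, so there is no in-paper proof to compare against and your self-contained derivation from the definitions of $\alpha^{a,\delta}$, $\gamma^{a,\delta}$ and $f^a_\#$ is a genuine reconstruction. The skeleton is right: with $\delta = \inc{f}{a}$ and $Y' = \nu f^a_\#$, the fixpoint equation $f^a_\#(Y') = Y'$ together with $f(a)=a$ gives $\gamma^{a,\delta}(f(b)) = Y'$ for $b = a \ominus \delta_{Y'}$; the case $y \notin Y'$ is pure monotonicity; and the case $y \in Y'$ reduces to extracting $f(b)(y) \sqsubseteq a(y) \ominus \delta$ from $a(y) \ominus f(b)(y) \sqsupseteq \delta$.

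The one blemish is the ``residuation law'' you invoke: $x \ominus y \sqsupseteq z \iff x \sqsupseteq y \oplus z$ is false as an equivalence in an MV-chain (take $x=0$, $y=1$, $z=0$ in $[0,1]$: the left side holds, the right side fails); the genuine adjunction runs in the other direction, $x \ominus y \sqsubseteq z \iff x \sqsubseteq y \oplus z$. Fortunately the implication you actually need is valid and follows more directly from the identity $x \ominus (x \ominus y) = x \sqcap y$ recorded in \S\ref{ss:lattices}: since $\delta \sqsubseteq a(y) \ominus f(b)(y)$, antitonicity of $\ominus$ in its second argument yields $a(y) \ominus \delta \sqsupseteq a(y) \ominus \bigl(a(y) \ominus f(b)(y)\bigr) = a(y) \sqcap f(b)(y) = f(b)(y)$, where the last equality uses $f(b)(y) \sqsubseteq f(a)(y) = a(y)$, which you had already established by monotonicity. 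With that repair the proof goes through; I would also make explicit the two small facts doing quiet work in your case split, namely that $\nu f^a_\# \subseteq [Y]^a$ (so that $b(y) = a(y) \ominus \delta$ for $y \in Y'$) and that $\inc{f}{a}$ is by construction small enough for $\langle \alpha^{a,\delta}, \gamma^{a,\delta}\rangle$ to be the Galois connection of \S\ref{ss:lattices}.
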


In the following we will use this result as a ``black box'': we assume
that given $f$ and a fixpoint $a$ of $f$ we can determine whether
$a=\mu f$ and, if not, obtain $a' \sqsubset a$ such
$f(a') \sqsubseteq a'$.

The above theory can easily be dualised (see
\cite{bekp:fixpoint-theory-upside-down} for the details of the dual view).

\section{Generalized strategy iteration}
\label{se:GSI}

In this section we develop two strategy iteration techniques for
determining least fixpoints. The first technique requires a so-called
min-decomposition and approaches the least fixpoint from above, while
the second uses a max-decomposition to ascend to the least fixpoint from
below.

Hence fixpoint iteration from above is seen strictly from the point of
view of the {\Min} player, while fixpoint iteration from below is from
the view of the {\Max} player, who want to minimize respectively maximize
the payoff. The player starts by guessing a strategy, which in the
case of the {\Min} (\Max) player over-approximates (under-approximates)
the true payoff. This strategy is then locally improved at each
iteration based on the payoff produced by the player following such a
strategy. That is, we compute fixpoints for a fixed strategy, which in
a two-player game means that the opponent plays optimally. When the
set of strategies is finite (or, at least, the search can be
restricted to a finite set), an optimal strategy will be found at some
point.

\subsection{Function decomposition}

We next introduce the setting where the generalisations of strategy
iteration will be developed. We assume that the game we are interested
in is played on a finite set of positions $Y$ and the payoff at each
position is an element of a suitable complete MV-chain $\monM$. This
payoff is given by a function in $\monM^Y$ that can be characterised
as the least fixpoint of a monotone function
$f : \monM^Y \to \monM^Y$. If we concentrate on the {\Min} player, each position $y\in Y$ is assigned a set of functions
$H_{\min}(y)\subseteq (\monM^Y \to \monM)$ where each function
$h\in H_{\min}(y)$ is one possible option that can be chosen by
{\Min}. Given $a\colon Y\to\monM$ as the current estimate of the
payoff, $h(a)$ is the resulting payoff at $y$. If the player does not
have a choice, this set is a singleton. Since it is the aim of {\Min}
to minimise she will choose an $h$ such that $h(a)$ is
minimal.

\begin{definition}[min-decomposition]
  \label{de:min-decomposition}
  Let $Y$ be a finite set and $\monM$ be a complete MV-chain. Given a function
  $f : \monM^Y \to \monM^Y$, a \emph{min-decomposition} of $f$ is a
  function $H_{\min} : Y \to \Powf{\monM^Y \to \monM}$ such that for
  all $y\in Y$ the set $H_{\min}(y)$ consists only of monotone
  functions and for all $a \in \monM^Y$ it holds
    $f(a)(y) = \min_{h \in H_{\min}(y)}  h(a)$.
\end{definition}

Observe that any monotone function $f : \monM^Y \to \monM^Y$ admits a
trivial min-decomposition $I$ defined by $I(y) = \{ h_y \}$ where
$h_y(a) = f(a)(y)$ for all $a \in \monM^Y$.

Whenever all $h\in H_{\min}(y)$ are not only monotone, but also
non-expansive, it can be shown easily that $f$ is also non-expansive
and we can obtain an approximation as discussed in
\S\ref{ss:upsidedown}. How this can be done compositionally is
explained in Appendix~\ref{se:decomposition-approximation-f}.
Max-decompositions, with analogous properties, are defined dually, i.e.\
$H_{\max}: Y \to \Powf{\monM^Y \to \monM}$ and
  $f(a)(y) = \max_{h \in H_{\max}(y)}  h(a)$.

\begin{toappendix}
  \subsection{Decomposition of $f$ and derivation of the approximation
    $f^a_\#$}
  \label{se:decomposition-approximation-f}

  In this section we show how a min-decomposition (analogously a
  max-decomposition) of a mapping $f\colon \monM^Y \to \monM^Y$ can be
  assembled using the basic functions and operators introduced in
  \cite{bekp:fixpoint-theory-upside-down}. This serves two purposes:
  in this way we show that $f$ is automatically non-expansive if
  obtained from non-expansive components. Second, this gives us a
  recipe to obtain the approximation $f^a_\#$, required for checking
  whether a given fixpoint is indeed the least.

  Table~\ref{tab:basic-functions-approximations} lists basic
  non-expansive functions and operators for composing them. Note that
  all those functions are non-expansive and the operators preserve
  non-expansiveness. In addition the table lists the corresponding
  approximations.

  \begin{table}[h]
    \normalsize
    \caption{\normalsize Basic functions $f\colon \monM^Y\to \monM^Z$
      (constant, reindexing, minimum, maximum, average), function
      composition, disjoint union and the corresponding approximations
      $f^a_\#\colon \Pow{\Ytop{Y}{a}} \to
      \Pow{\Ytop{Z}{f(a)}}$. 
    }
    \label{tab:basic-functions-approximations}
    \small
    \smallskip
    \emph{Notation:} $\mathcal{R}^{-1}(z) = \{y\in Y \mid
    y\mathcal{R}z\}$, $\mathit{supp}(p) = \{y\in Y\mid p(y) > 0\}$ for
    $p\in \mathcal{D}(Y)$, \\
    $\arg\min_{y\in Y'} a(y)$, resp.\ $\arg\max_{y\in Y'} a(y)$,
    the set of elements where $a|_{Y'}$ \\
    reaches the minimum, resp.\ the maximum,
    for $Y' \subseteq Y$ and $a\in\monM^Y$
    \begin{center}
      \begin{tabular}{|l|l|l|}
        \hline function $f$ & definition of $f$ & $f_\#^a(Y')$ (below) \\
        \hline\hline
        $c_k$ & $f(a) = k$ & $\emptyset$ \\
        ($k\in\monM^Z$) && \\ \hline
        $u^*$ & $f(a) = a \circ u$ & $u^{-1}(Y')$ \\
        ($u\colon Z\to Y$) & & \\ \hline
        $\mins_\mathcal{R}$ &
        $f(a)(z) = \min\limits_{y\mathcal{R}z} a(y)$ &
        $\{z \in \Ytop{Z}{f(a)} \mid
        \arg\min\limits_{y\in\mathcal{R}^{-1}(z)} a(y)\cap Y' \neq
        \emptyset\}$\mystrutbl \\
        ($\mathcal{R}\subseteq Y\times Z$) & &
        \\ \hline
        $\maxs_\mathcal{R}$ &
        $f(a)(z) = \max\limits_{y\mathcal{R}z} a(y)$ &
        $\{z \in \Ytop{Z}{f(a)} \mid
      \arg\max\limits_{y\in\mathcal{R}^{-1}(z)} a(y) \subseteq Y'\}$
        \mystrutbl \\
        ($\mathcal{R}\subseteq Y\times Z$) &&
        \\
        \hline \mystrutab$\mathrm{av}_D$ \quad ($\monM = [0,1]$, &
        $f(a)(p) = \sum\limits_{y\in Y} p(y)\cdot a(y)$ &
        $\{p \in \Ytop{D}{f(a)} \mid \mathit{supp}(p) \subseteq Y'\}$ \\
        $Z = D \subseteq \mathcal{D}(Y)$) & &
        \\
        \hline $h\circ g$ & $f(a) = h(g(a))$ &
        $h^{g(a)}_\# \circ g^a_\#(Y')$  \\
        ($g\colon \monM^Y\to \monM^W$, && \\
        $h\colon \monM^W\to \monM^Z$) && \\ \hline
        $\biguplus\limits_{i\in I} f_i$ \quad $I$ finite &
        $f(a)(z) = f_i(a|_{Y_i})(z)$ &
        $\biguplus_{i\in I} (f_i)^{a|_{Y_i}}_\#(Y'\cap Y_i)$ 
        \\
        ($f_i\colon \monM^{Y_i}\to \monM^{Z_i}$, &
        ($z\in Z_i$) &
        \\
        $Y = \bigcup\limits_{i\in I} Y_i$,
        $Z = \biguplus\limits_{i\in I} Z_i$)\mystrutbl & & \\
        \hline
      \end{tabular}
    \end{center}
  \end{table}

  We will now show how to obtain the approximation of a function $f$
  given its min-decomposition and approximations for all the functions
  used in the min-decomposition.

  \begin{proposition}
    Let $Y$ be a finite set and $\monM$ a complete MV-chain. Let
    $f\colon \monM^Y \to \monM^Y$ be a function and
    $H_{\min} \colon Y \to \Powf{\monM^Y \to \monM}$ a given
    min-decomposition such that, for all $y\in Y$, all functions
    $h\in H_{\min}(y)$ are non-expansive. Then $f$ is non-expansive
    and the approximation
    $f_\#^a(Y') \colon \Pow{[Y]^a} \to \Pow{[Y]^{f(a)}}$ is given by
    \[ f_\#^a (Y') = \{ y\in [Y]^{f(a)} \mid \exists h\,
      \big( h = {\arg\min}_{h'\in H_{\min}(y)} h'(a) \ \land\
      h_\#^a(Y') \neq \emptyset \big) \} \]
    for $a\in \monM^Y$ and
    $Y'\subseteq [Y]^a$.
  \end{proposition}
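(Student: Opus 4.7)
The plan proceeds in three steps: verify non-expansiveness of $f$, unpack the definition of the approximation, and then exploit a ``gap'' argument to factor $f$ through a distinguished minimizer.

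For non-expansiveness: given $a, b \in \monM^Y$ and $y \in Y$, pick any minimizer $h^\ast \in H_{\min}(y)$ at $a$. Then $f(a)(y) = h^\ast(a)$ while $f(b)(y) = \min_h h(b) \sqsubseteq h^\ast(b)$, so $f(b)(y) \ominus f(a)(y) \sqsubseteq h^\ast(b) \ominus h^\ast(a) \sqsubseteq \norm{b \ominus a}$ by non-expansiveness of $h^\ast$. Taking the max over $y$ yields $\norm{f(b) \ominus f(a)} \sqsubseteq \norm{b \ominus a}$, so $f$ is non-expansive and the construction of \S\ref{ss:upsidedown} produces an approximation $f^a_\#$.

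For the formula, I would first unfold the definition: writing $b = \alpha^{a,\delta}(Y') = a \ominus \delta_{Y'}$,
\[
  f^{a,\delta}_\#(Y') = \{y \in \Ytop{Y}{f(a)} \mid f(a)(y) \ominus f(b)(y) \sqsupseteq \delta\},
\]
and then fix a uniform witness $\delta^\ast \sqsupset 0$ satisfying simultaneously: (i) $\delta^\ast \sqsubseteq \inc{f}{a}$, (ii) $\delta^\ast \sqsubseteq \inc{h}{a}$ for every $h \in H_{\min}(y)$ and $y \in Y$, and (iii) $\delta^\ast$ strictly smaller than every gap $h'(a) \ominus f(a)(y)$ with $h' \in H_{\min}(y)$ non-minimal at $a$. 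By finiteness of $Y$ and of each $H_{\min}(y)$, this $\delta^\ast$ exists as the minimum of a finite family of strictly positive elements of $\monM$.

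With $b = a \ominus \delta^\ast_{Y'}$, clause (iii) together with non-expansiveness of each non-minimal $h'$ forces $h'(b) \sqsupseteq h'(a) \ominus \delta^\ast \sqsupset f(a)(y) \sqsupseteq h_y(b)$ for any minimizer $h_y$ at $a$; hence $f(b)(y)$ is realised by some $h_y$ minimal at $a$, yielding $f(a)(y) \ominus f(b)(y) = h_y(a) \ominus h_y(b)$. Since $h_y$ has codomain $\monM \cong \monM^{\{\ast\}}$, a direct unfolding shows that $(h_y)^{a,\delta^\ast}_\#(Y') \neq \emptyset$ iff $h_y(a) \sqsupset 0$ (automatic when $y \in \Ytop{Y}{f(a)}$) and $h_y(a) \ominus h_y(b) \sqsupseteq \delta^\ast$. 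Stability under (i) and (ii) gives $f^a_\# = f^{a,\delta^\ast}_\#$ and $(h_y)^a_\# = (h_y)^{a,\delta^\ast}_\#$; combining the two equivalences in both directions (using that any minimizer $h_y$ at $a$ with $(h_y)^a_\#(Y') \neq \emptyset$ contributes to $f^{a,\delta^\ast}_\#$ via $f(b)(y) \sqsubseteq h_y(b)$) delivers the stated formula.

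I expect the main obstacle to be the uniform choice of $\delta^\ast$: clause (iii) relies on $h'(a) \ominus f(a)(y) \sqsupset 0$ for every non-minimal $h'$ (automatic by definition of minimizer) and on a finite minimum of strictly positive MV-chain elements being strictly positive. Both points depend essentially on the finitary nature of the min-decomposition; without the assumption $H_{\min}(y) \in \Powf{\monM^Y \to \monM}$ the infimum of the gaps could collapse to $0$ and the argument would break.
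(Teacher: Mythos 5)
Your overall strategy is sound and it is genuinely different from the paper's. The paper never argues pointwise: it rewrites $f$ as a composition $\min\nolimits_u \circ\, g$, where $g = \biguplus_{y}\biguplus_{i\in I_y} h_y^i$ is the disjoint union of all the component functions (each $h_y^i$ viewed as having codomain $\monM^{\{i\}}$) and $u$ collapses the index set back onto $Y$. Non-expansiveness is then inherited from the fact that disjoint union and composition preserve it, and the formula for $f_\#^a$ falls out of the tabulated approximations for $\biguplus$, reindexing/composition and $\min\nolimits_u$ --- in particular, all the ``choose $\delta$ small enough so that everything stabilises'' bookkeeping is delegated to the general theory of $\inc{f}{a}$ from the earlier paper. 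Your proof instead verifies non-expansiveness directly (correctly: $f(b)(y)\ominus f(a)(y) \sqsubseteq h^\ast(b)\ominus h^\ast(a)$ for a minimizer $h^\ast$ at $a$) and unfolds the Galois-connection definition of $f^{a,\delta}_\#$ by hand, using a uniform $\delta^\ast$ and a gap argument to show that the minimum at $b = a\ominus\delta^\ast_{Y'}$ is attained by a minimizer at $a$. That is more self-contained and arguably more illuminating about \emph{why} the formula holds, at the price of having to manage the $\delta$-stabilisation yourself --- which is where the one real problem sits.

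The problem is clause~(iii). You require $\delta^\ast \sqsupset 0$ to be \emph{strictly} smaller than every gap $h'(a)\ominus f(a)(y)$, and then assert that such a $\delta^\ast$ ``exists as the minimum of a finite family of strictly positive elements''. These two statements contradict each other (the minimum of the gaps is not strictly below the smallest gap), and in a discrete MV-chain such as $K=\{0,\dots,k\}$ --- which the paper explicitly uses for energy games --- no strictly positive element strictly below a gap of $1$ exists, so clause~(iii) as written is unsatisfiable there. Fortunately your argument does not need strictness: with $\delta^\ast \sqsubseteq h'(a)\ominus f(a)(y)$ you still get $h'(b) \sqsupseteq h'(a)\ominus\delta^\ast \sqsupseteq f(a)(y) \sqsupseteq h_y(b)$ for every non-minimal $h'$ and every minimizer $h_y$ at $a$, which is enough to conclude that $f(b)(y)$ is realised by \emph{some} minimizer at $a$ (possibly tied with a non-minimizer, which is harmless), and the rest of your two-way equivalence with $(h_y)^{a,\delta^\ast}_\#(Y')\neq\emptyset$ goes through unchanged. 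So: replace $\sqsubset$ by $\sqsubseteq$ in clause~(iii), take $\delta^\ast$ to be the minimum of the gaps together with the constraints in (i) and (ii), and the proof is complete.
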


\begin{proof}
  We first show that $f$ is non-expansive by proving that it can be
  expressed as a composition of non-expansive functions. (Recall that
  function composition and disjoint union preserve non-expansiveness.)

  For all $y \in Y$, let $H_{\min}(y)= \{ h_y^1,\dots ,h_y^{k_y}\}$ and
  let $I_y = \{ 1,\dots , k_y\}$ be the corresponding index set.
  We have
  $h_y^i \colon \monM^Y \to \monM$ for each $i\in I_y$, where -- for
  convenience -- we view each function $h_y^i$ as being of type
  $\monM^Y \to \monM^{\{i\}}$, where $\{i\}$ is the singleton set
  containing $i$.

  We introduce auxiliary functions $g_y\colon \monM^Y\to \monM^{I_y}$
  and $g\colon \monM^Y \to \monM^I$ (with $I = \biguplus_{y\in Y}I_y$)
  defined as below, where  $a\in \monM^Y$ and
  $i\in I_y$:  
  \[ g_y = \biguplus_{j\in I_y} h_y^j \quad g_y(a)(i) =
    h_y^{i}(a) \qquad g= \biguplus_{y\in
      Y}g_y\quad g(a)(i) = g_y(a)(i) \]
  Next we define $u\colon I \to Y$
  where $u(i) = y$ for all $i\in I_y$. This allows decomposition
  of $f$ as
  \[ f = \min\nolimits_u \circ g \] with
  $\min_u \colon \monM^I\to \monM^Y$. More
  intuitively, given $a\in \monM^Y$ and $y\in Y$, we have
  \[ f(a)(y) = \min_{i,u(i) = y} g_y(a)(i) = \min_{i\in I_y}
    h^i_y(a)\] and
  \[ f(a) = \min\nolimits_u(g(a)) = \min\nolimits_u
    (\biguplus_{y\in Y}g_y(a)) = \min\nolimits_u (\biguplus_{y\in
      Y}\biguplus_{u(i)=y} h_y^i (a)).\]

  Next, we want to express the approximation
  $f^a_\#\colon \Pow{[Y]^a} \to \Pow{[Y]^{f(a)}}$ for
  $a\in \monM^Y$ in terms of the approximations of the components
  $(h_y^{i})_\#^a \colon \Pow{[Y]^a} \to
  \Pow{[\{i\}]^{h_y^{i}(a)}}$.

  In the following $Y'\subseteq [Y]^a$ is some subset
  of $[Y]^a$.
  
  First observe, that 
  $(g_y)_\#^a \colon \Pow{[Y]^a} \to \Pow{[I_y]^{g_y(a)}}$, as recalled in Table~\ref{tab:basic-functions-approximations}, is given by
  \[ (g_y)_\#^a(Y') = \bigcup_{i\in I_y} \biggl( (h_y^i)^a_\# (Y')
    \biggr). \]
  Moreover
  $g^a_\# \colon \Pow{[Y]} \to \Pow{[I]^{g(a)}}$ is given by
  \[ g^a_\# (Y') = \bigcup_{y\in Y} (g_y)_\#^a = \bigcup_{y\in Y} \bigcup_{i\in I_y} \biggl(
    (h_y^i)^a_\# (Y') \biggr) \]
  Hence we obtain for $i\in I_y$:
  \[ i \in g_\#^a(Y') \text{ iff } i \in (g_y)_\#^a(Y') \text{ iff } i \in (h_y^{i})_\#^a (Y') \text{ iff } (h_y^{i})_\#^a (Y')\neq \emptyset.\]
  Finally, we can conclude
     \begin{align*}
       f_\#^a(Y') &= \{ y \in [Y]^{f(a)} \mid \mathit{Min}_{g(a)\mid u^{-1}(y)} \cap g^a_\#(Y') \neq \emptyset \} \\
       &= \{ y\in [Y]^{f(a)} \mid \exists i\, \big( i = {\arg\min}_{j \in I_y} g(a)(j) \ \land\  i \in g_\#^a(Y') \big) \} \\
       &= \{ y\in [Y]^{f(a)} \mid \exists i  \big( i={\arg\min}_{j\in I_y} g_y(a)(j) \ \land\  i \in (g_y)_\#^a(Y') \big) \} \\
       &= \{ y\in [Y]^{f(a)} \mid \exists i\, \big( i= {\arg\min}_{j\in I_y} h_y^{j}(a) \ \land\  i \in (h_y^{i})_\#^a(Y') \big) \} \\
       &= \{ y\in [Y]^{f(a)} \mid \exists h\, \big( h = {\arg\min}_{h'\in H_{\min}(y)} h'(a) \ \land\ 
       h_\#^a(Y')\neq \emptyset \big) \}
     \end{align*}
     %
  \end{proof}
  
  In a similar fashion, for a max-decomposition $H_{\max}$ of $f$, the
  approximation \linebreak $f_\#^a(Y') \colon \Pow{[Y]^a} \to \Pow{[Y]^{f(a)}}$ is given by
  \[ f_\#^a (Y') = \{ y\in [Y]^{f(a)} \mid \forall h\,
    \big(h = {\arg\max}_{h'\in H_{\max}(y)} h'(a)\ \Rightarrow \
    h_\#^a(Y') \neq \emptyset\big) \}
  \]
  for $a\in \monM^Y$ and
  $Y'\subseteq [Y]^a$.

\subsection{Proofs}
\label{ap:proofs-se-3}

\end{toappendix}

Fixing a strategy can be seen as fixing, for all $y\in Y$, some
element in $H_{\min}(y)$.

\begin{definition}[strategy]
  \label{de:strategy}
  Let $Y$ be a finite set, $\monM$ be a complete MV-chain,
  $f : \monM^Y \to \monM^Y$ and let
  $H_{\min} : Y \to \Powf{\monM^Y \to \monM}$ be a min-decomposition
  of $f$. A \emph{strategy} in $H_{\min}$ is a function
  $C : Y \to (\monM^Y \to \monM)$ such that for all $y \in Y$ it holds
  that $C(y) \in H_{\min}(y)$.
  For a fixed $C$ we define $f_C : \monM^Y \to \monM^Y$ as
  $f_C(a)(y) = C(y)(a)$ for all $a \in \monM^Y$ and $y \in Y$.

  Strategies in a max-decomposition are defined dually.
\end{definition}

The letter $C$ stands for ``choice'' and typically $\mu f_C$ is easier
to compute than $\mu f$.

\begin{example}
  \label{ex:sg}
  As a running example for illustrating our theory and the
  resulting algorithms
  we will use \emph{simple stochastic
    games} (SSGs). We first show that they fall into the framework. Fix
  an SSG with a finite set $V$ of states, partitioned into
  $\mathit{MIN}$, $\mathit{MAX}$, $\mathit{AV}$ (average) and
  $\mathit{SINK}$. Successors of $\mathit{MIN}$ and $\mathit{MAX}$
  states are given by a relation
  $\to\ \subseteq (\mathit{MIN}\cup\mathit{MAX}) \times V$, while
  $p : \mathit{AV} \to \interval{0}{1}^V$ maps each $v\in \mathit{AV}$
  to a distribution $p(v)\in \mathcal{D}(V)$.  Finally,
  $c : \mathit{SINK}\to [0,1]$ provides the payoff of sink states.

  The fixpoint function $\mathcal{V} : [0,1]^V\to [0,1]^V$, as defined
  in the introduction, admits a min-decomposition
  $H_{\min}\colon V \to \Powf{\monM^V\to \monM}$ defined for all
  $a \in \monM^Y$ as follows:
  \begin{itemize}
    
  \item for $v \in \mathit{MIN}$,
    $H_{\min}(v) = \{ h_{v'} \mid v \to v'\}$ with $h_{v'}(a) = a(v')$;
    
  \item for $v \in \mathit{MAX}$,
    $H_{\min}(v) = \{ h \}$ with $h(a) = \max_{v\to v'} a(v')$;
    
  \item for $v \in \mathit{AV}$,
    $H_{\min}(v) = \{ h \}$ with $h(a) = \sum_{v'\in V} p(v)(v')\cdot a(v')$;

  \item for $v \in \mathit{SINK}$,
    $H_{\min}(v) = \{ h \}$ with $h(a) = c(v)$.
    
  \end{itemize}
  A max-decomposition can be defined dually.

  \smallskip
  
  For instance, consider the SSG in Fig.~\ref{fi:ssg-example}
  where
  $V=\{ \state{1},\state{\varepsilon},\state{av},\state{max},\state{min} \}$
  with the obvious partitioning. The fixpoint function is
  $\mathcal{V}\colon [0,1]^V\to [0,1]^V$ defined, for $a\in [0,1]^V$,
  by
  \begin{eqnarray*}
    \mathcal{V}(a)(\state{1}) = 1 \qquad
    \mathcal{V}(a)(\state{\varepsilon})= \varepsilon \qquad
    \mathcal{V}(a)(\state{av}) = \frac{1}{2} a(\state{min}) + \frac{1}{2}
    a(\state{max}) \\
     \mathcal{V}(a)(\state{max}) = \max \{
    a(\state{\varepsilon}),a(\state{av})\} \qquad \mathcal{V}(a)(\state{min}) =
    \min \{ a(\state{1}),a(\state{av})\}.
  \end{eqnarray*}
  \begin{figure}[t]
    \begin{center}
      \begin{tikzpicture}
        \node (S1) at (0,0) [circle,draw,minimum width=2.5em] {${\state{1}}$};
        \node (MIN) at (2,0) [circle,draw,minimum width=3em,inner sep=0] {$\state{min}$};
        \node (AV) at (4,0) [circle,draw,minimum width=3em,inner sep=0] {$\state{av}$};
        \node (MAX) at (6,0) [circle,draw,minimum width=3em,inner sep=0] {$\state{max}$};
        \node (S0) at (8,0) [circle,draw,minimum width=2.5em] {$\state{\varepsilon}$};

        \draw[->] (MIN) to (S1);
        \draw[->] (MAX) to (S0);
        \draw (MIN.340) edge[auto=right,->] (AV.200);
        \draw (MAX.200) edge[auto=right,->] (AV.340);
        \draw (AV.20) edge[auto=right,->] node [above] {$\frac{1}{2}$} (MAX.160);
        \draw (AV.160) edge[auto=right,->] node [above] {$\frac{1}{2}$} (MIN.20);
      \end{tikzpicture}
    \end{center}
    \caption{An example of a simple stochastic game. States
      $\state{1}$, $\state{\varepsilon}$ have payoff $1$, $\varepsilon > 0$
      respectively.}
    \label{fi:ssg-example}
  \end{figure}
  The min-decomposition defined in general above, in this case is
  $H_{\min}\colon V \to \Powf{\monM^V\to \monM}$ defined for all $a \in \monM^Y$ as follows:
  for $v\in V\setminus \{ \state{min} \}$, we let $H_{\min}(v) = \{ h \}$
  with $h(a) = \mathcal{V}(a)(v)$, while
  $H_{\min}(\state{min}) = \{ h_{\state{1}}, h_{\state{av}}\}$
  with $h_{\state{1}}(a)=a(\state{1})$ and
  $h_{\state{av}}(a)= a(\state{av})$.
  All strategies in $H_{\min}$ assign to every state $v\in V\setminus \{ \state{min} \}$ the only element in $H_{\min}(v)$. Hence they
  are determined by the value on state $\state{min}$: thus there are two strategies $C_1^{\min}, C_2^{\min}$ in $H_{\min}$ with
  $C_1^{\min}(\state{min}) = h_{\state{1}}$ and
  $C_2^{\min}(\state{min}) = h_{\state{av}}$.

  Dually, a max-decomposition
  $H_{\max}\colon V \to \Powf{\monM^V\to \monM}$ is defined for all $a \in \monM^Y$ as follows:
  $H_{\max}(v) = \{ h \}$ with $h(a) = \mathcal{V}(a)(v)$ for all
  $v\in V\setminus \{ \state{max} \}$ and
  $H_{\max}(\state{max}) = \{ h_{\state{\varepsilon}},h_{\state{av}}\}$
  with $h_{\state{\varepsilon}}(a)=a({\state{\varepsilon}})$ and
  $h_{\state{av}}(a)= a(\state{av})$.
  Again, there are two strategies $C_1^{\max}$ and $C_2^{\max}$ in $H_{\max}$
  that differ for the value assigned to $\state{max}$: 
$C_1^{\max}(\state{max}) = h_{\state{\varepsilon}}$ and
  $C_2^{\max}(\state{max}) = h_{\state{av}}$.
\end{example}

\begin{toappendix}
The following lemma reports two easy observations which will be used
several times.

\begin{lemma}
  \label{le:min-set-basic}
  Let $Y$ be a finite set, $\monM$ be a complete MV-chain and let
  $H_{\min} : Y \to \Powf{\monM^Y \to \monM}$ be a min-decomposition of $f\colon \monM^Y\to \monM^Y$.

  \begin{enumerate}
  \item
    \label{le:min-set-basic:a}
    For all strategies $C$ in $H_{\min}$ we have that $f \sqsubseteq f_C$, pointwise.
  \item
    \label{le:min-set-basic:b}
    For all $a \in \monM^Y$, there is a strategy $C_a$ such that
    $C_a(y)(a) = f(a)(y)$ for all $y \in Y$.
  \end{enumerate}
\end{lemma}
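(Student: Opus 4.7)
The plan is to derive both claims directly from the defining equation of a min-decomposition, namely $f(a)(y) = \min_{h \in H_{\min}(y)} h(a)$, without invoking any further machinery; neither part requires monotonicity or any property of $\monM$ beyond being a totally ordered set in which finite nonempty subsets have a minimum.

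For part~\ref{le:min-set-basic:a}, I would fix an arbitrary strategy $C$ in $H_{\min}$ together with arbitrary $a \in \monM^Y$ and $y \in Y$. By Definition~\ref{de:strategy}, $C(y) \in H_{\min}(y)$, hence $C(y)(a)$ is one of the values over which the minimum defining $f(a)(y)$ is taken. This gives immediately
\[ f(a)(y) \;=\; \min_{h \in H_{\min}(y)} h(a) \;\sqsubseteq\; C(y)(a) \;=\; f_C(a)(y). \]
Since $a$ and $y$ were arbitrary, $f \sqsubseteq f_C$ pointwise.

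For part~\ref{le:min-set-basic:b}, I would fix $a \in \monM^Y$ and exploit the fact that, by definition of min-decomposition, $H_{\min}(y) \in \Powf{\monM^Y \to \monM}$ is a finite (and implicitly nonempty) set; hence the minimum in $\min_{h \in H_{\min}(y)} h(a)$ is attained. Thus for every $y \in Y$ I can pick some $C_a(y) \in \arg\min_{h \in H_{\min}(y)} h(a) \subseteq H_{\min}(y)$, which is a valid strategy by Definition~\ref{de:strategy} and satisfies $C_a(y)(a) = \min_{h \in H_{\min}(y)} h(a) = f(a)(y)$.

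There is essentially no obstacle: the only point worth spelling out is the attainment of the minimum on each finite set $H_{\min}(y)$, which justifies the pointwise choice of $C_a$. Note that $C_a$ depends on $a$, as indicated by the notation, and is in general not unique whenever several elements of $H_{\min}(y)$ achieve the minimum.
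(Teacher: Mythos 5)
Your proof is correct and follows essentially the same route as the paper's: part (a) is the immediate observation that $C(y)\in H_{\min}(y)$ makes $C(y)(a)$ one of the values being minimised, and part (b) picks, for each $y$, an element of $H_{\min}(y)$ realising the minimum, which exists by finiteness. Your additional remarks on nonemptiness and non-uniqueness of $C_a$ are accurate but do not change the argument.
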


\begin{proof}
  (\ref{le:min-set-basic:a}) Just note that for all $a \in \monM^Y$ and
  $y \in Y$, we have
  \begin{align*}
    f_C(a)(y) & = C(y)(a)\\
              & \sqsupseteq \min_{h \in H_{\min}(y)} h(a)
              & \mbox{[since $C(y) \in H_{\min}(y)$]}\\
              & = f(a)(y)
              & \mbox{[by definition of min-decomposition]}
  \end{align*}
  (\ref{le:min-set-basic:b}) For all $y \in Y$, it holds
  $f(a)(y) = \min_{h \in H_{\min}(y)} h(a) = h_y(a)$ for some
  $h_y \in H_{\min}(y)$ since the minimum is realised ($H_{\min}$ is 
  finite). And thus we can define $C_a(y) = h_y$.
\end{proof}

The above result can be easily adapted for a max-decomposition by
reversing the order.
\end{toappendix}

\subsection{Strategy iteration from above}
\label{SIfromBelow}

In this section we propose a generalized strategy iteration algorithm
from above. It is based on a min-decomposition of the function and,
intuitively, at each iteration the player {\Min} improves her
strategy. An issue here is that this iteration may get stuck at a
fixpoint strictly larger than the least one. Recognising and
overcoming this problem, thus continuing the iteration until the least
fixpoint is reached, requires the theory described in
\S\ref{ss:upsidedown}.

The basic result that motivates strategy iteration from above is a
characterisation of the least fixpoint of a function in terms of a
min-decomposition.

\begin{propositionrep}[least fixpoint from min-decompositions]
  \label{pr:above-muf}
  Let $Y$ be a finite set, $\monM$ a complete MV-chain,
  $f : \monM^Y \to \monM^Y$ a monotone function and let
  $H_{\min} : Y \to \Powf{\monM^Y \to \monM}$ be a min-decomposition of $f$.
  Then $\mu f = \min \{ \mu f_C \mid C \text{ is a strategy in } H_{\min}\}$.
\end{propositionrep}

\begin{proof}
  By Lemma~\ref{le:min-set-basic}(\ref{le:min-set-basic:a}), for all
  strategies $C$ in $H_{\min}$ we have $f \sqsubseteq f_C$, whence
  $\mu f \sqsubseteq \mu f_C$. Then
  $\mu f \sqsubseteq \min \{ \mu f_C \mid C \text{ is a strategy in } H_{\min}\}$ follows.

  For the converse inequality, note that by
  Lemma~\ref{le:min-set-basic}(\ref{le:min-set-basic:b}) there exists
  some strategy $C$ in $H_{\min}$, such that
  $f_C(\mu f)(y) = C(y)(\mu f) = f(\mu f)(y) = \mu f(y)$ for all
  $y\in Y$. Thus $\mu f$ is a (pre-)fixpoint of $f_C$ and thus
  $\mu f_c \sqsubseteq \mu f$. Hence
  $\mu f \sqsupseteq \min \{ \mu f_C \mid C \text{ is a strategy in } H_{\min}\}$.
\end{proof}

Although we do not focus on complexity issues, we observe that -- under suitable assumptions -- we can show that given
a function $f$ as a min-decomposition, the problem of checking whether
$\mu f \sqsubseteq b$ for some bound $b\in\monM^Y$ is in
$\mathsf{NP}$. For each $y\in Y$
we can nondeterministically guess $C(y)\in H_{\min}(y)$ thus defining
a strategy.  Assuming that the computation of $\mu f_C$ is polynomial,
we can thus determine in non-deterministic polynomial time (in the
size of the representation of $f$) whether
$\mu f\sqsubseteq \mu f_C\sqsubseteq b$.

Now in order to compute the least fixpoint, the idea is to start from
some (arbitrary) strategy, say $C_0$, in $H_{\min}$. At each
iteration, if the current strategy is $C_i$ one tries to construct, on
the basis of $\mu f_{C_i}$, a new strategy $C_{i+1}$ which improves
$C_i$, in the sense that $\mu f_{C_{i+1}}$ becomes smaller. This
motivates the notion of improvement.

\begin{definition}[min-improvement]
  \label{de:improvement}
  Let $f : \monM^Y \to \monM^Y$ be a monotone function, where $Y$ is a
  finite set and $\monM$ a complete MV-chain, and let $H_{\min}$ be a
  min-decomposition. Given  strategies $C, C'$ in $H_{\min}$, we say
  that $C'$ is a \emph{min-improvement} of $C$ if
  $f_{C'}(\mu f_C) \sqsubset \mu f_C$.  It is called a \emph{stable}
  min-improvement if in addition $C'(y) = C(y)$ for all $y \in Y$ such
  that $f_{C'}(\mu f_C)(y) = \mu f_C(y)$. We denote by $\minimp{C}$
  (respectively $\minimps{C}$) the set of (stable) min-improvements of
  $C$.
\end{definition}

The notion of stability
will
turn out to be useful later,
for performing strategy iteration from below (as explained in the next section).
In a stable min-improvement, the player is only allowed to switch
the strategy in a state if this yields a strictly better payoff.
Interestingly, instances of this notion are adopted, more or less implicitly, in other strategy improvement algorithms in the literature (cf.\ \cite[Definition~13]{ABS:GSIM} and the way in which improvements are computed in \cite{BrimChal:2010}).
Clearly $\minimps{C} \subseteq \minimp{C}$.
In addition, it can be easily seen that there exists a stable min-improvement
as long as there is any improvement.

\begin{remark}[obtaining min-improvements]
  \label{rem:min-imp}
  For a strategy $C$, if $\minimp{C} \neq \emptyset$, one can obtain a
  min-improvement of $C$ by taking $C' \neq C$ defined as
  $C'(y) =\arg\min_{h\in H_{\min}(y)} h(\mu f_C)$
  and a stable
  min-improvement as:
  \begin{align*}
    C'(y) = \left\{
      \begin{array}{lll}
        C(y) & & \mbox{if $f(\mu f_C)(y) = \mu f_C(y)$}\\
        \arg\min_{h\in H_{\min}(y)} h(\mu f_C) & & \mbox{otherwise}
      \end{array}
    \right.
  \end{align*}
 There could be several $h\in H_{\min}(y)$ where $h(\mu f_C)$
is minimal. Any such choice is valid.
\end{remark}

We next show that, as suggested by the terminology, a min-improvement
leads to a smaller least fixpoint.

\begin{lemmarep}[min-improvements reduce fixpoints] 
  \label{le:improve-above}
  Let $Y$ be a finite set, $\monM$ a complete MV-chain,
  $f : \monM^Y \to \monM^Y$ a monotone function and $H_{\min}$ a
  min-decomposition of $f$. Given a strategy $C$ in $H_{\min}$ and a
  min-improvement $C' \in \minimp{C}$ it holds
  $\mu f_{C'} \sqsubset \mu f_{C}$.
\end{lemmarep}

\begin{proof}
  By definition of improvement, we have that
  $f_{C'}(\mu f_C) \sqsubset \mu f_C$, i.e., $\mu f_C$ is a
  pre-fixpoint of $f_{C'}$ and it is not a fixpoint. Hence by
  Knaster-Tarski $\mu f_{C'} \sqsubset \mu f_{C}$ follows.
\end{proof}

Thus, once the strategy can be improved, we will get closer to the least
fixpoint of $f$. We next show that an improvement of the current
strategy exists as long as we have not encountered a fixpoint of $f$.

\begin{lemmarep}[min-improvements exist for non-fixpoints]
  \label{lem:iff-above}
  Let $Y$ be a finite set, $\monM$ a complete MV-chain, $f : \monM^Y \to \monM^Y$ a
  monotone function and $H_{\min}$ a min-decomposition. Given a strategy
  $C$ in $H_{\min}$, the following are equivalent:
  \begin{enumerate}
  \item $\mu f_C \not\in \mathit{Fix}(f)$
  \item $\minimp{C} \neq \emptyset$
  \item $f(\mu f_C) \sqsubset \mu f_C$
  \end{enumerate}
\end{lemmarep}

\begin{proof}
  (1 $\Rightarrow$ 2): Assume $\mu f_C \notin \mathit{Fix}(f)$. By
  Lemma~\ref{le:min-set-basic}(\ref{le:min-set-basic:a}) we know
  $f \sqsubseteq f_C$. Hence
  $\mu f_C = f_C (\mu f_C) \sqsupseteq f (\mu f_C)$. Since
  $\mu f_C \notin \mathit{Fix}(f)$, we deduce that the inequality is
  strict $\mu f_C \sqsupset f (\mu f_C)$.

  By Lemma~\ref{le:min-set-basic}(\ref{le:min-set-basic:b}) we can
  take a strategy $C'$, s.t. for all $y\in Y$
  \begin{align*}
    f(\mu f_C)(y) = C'(y) (\mu f_C) = f_{C'}(\mu f_C)(y)
  \end{align*}
  Thus $\mu f_C \sqsupset f_{C'}(\mu f_C)$ and a min-improvement
  exists by definition.

  \medskip
  
  (2 $\Rightarrow$ 1): Let $C' \in \minimp{C}$. By definition of
  improvement $f_{C'}(\mu f_C)\sqsubset \mu f_C$. By
  Lemma~\ref{le:min-set-basic}(\ref{le:min-set-basic:a}) we know
  $f \sqsubseteq f_{C'}$ and thus
  $f(\mu f_C)\sqsubseteq f_{C'}(\mu f_C)$. Joining the two, we obtain
  $f(\mu f_C)\sqsubset \mu f_C$ and thus
  $\mu f_C \notin \mathit{Fix}(f)$, as desired.

  \medskip
  
  (1 $\Leftrightarrow$ 3): By
  Lemma~\ref{le:min-set-basic}(\ref{le:min-set-basic:a}) we know that
  $f(\mu f_C) \sqsubseteq f_C(\mu f_C) = \mu f_C$. Hence
  $\mu f_C \notin \mathit{Fix}(f)$, i.e., $f(\mu f_C ) \neq \mu f_C$
  is equivalent to $f(\mu f_C) \sqsubset \mu f_C$, as desired.
\end{proof}

The above result suggests an algorithm for computing a fixpoint of a
function $f : \monM^Y \to \monM^Y$ on the basis of some
min-decomposition. The idea is to guess some strategy $C$, determine
$\mu f_C$ and check $\minimp{C}$. If this set is empty we have
reached some fixpoint, otherwise choose $C'\in \minimp{C}$ for the
next iteration.
Note that for this algorithm it is irrelevant whether
we use min-improvements or restrict to stable min-improvements.
We also note that this procedure and the developed theory to this point work for monotone functions $f\colon \latL^Y \to \latL^Y$ where $\latL$ is a complete lattice. 

When we are interested in the least fixpoint and the function admits
many fixpoints, the sketched algorithm determines a
fixpoint which might not be the desired one.
Exploiting the theory from~\cite{bekp:fixpoint-theory-upside-down},
summarised in \S\ref{ss:upsidedown}, we can refine the algorithm to
ensure that it computes $\mu f$.
For this, we have to work with non-expansive functions
$f : \monM^Y \to \monM^Y$ with $\monM$ being a complete MV-chain and $Y$ a
finite set.
In fact, in this setting, given a
fixpoint of $f$, say $a \in \monM^Y$, relying on Theorem~\ref{th:fixpoint-sound-compl}, we can check whether it is the
least fixpoint of $f$. In case it is not, we can ``improve''
it obtaining a smaller pre-fixpoint of $f$ in a way that
we can continue the iteration from there.
The resulting algorithm is reported in
Fig.~\ref{fi:alg-min-fix-above}.
Observe that in step~\ref{fi:alg-min-fix-above:2b} we clearly do not
need to compute all improvements. Rather, a min-improvement, whenever it
exists, can be determined, on the basis of
Definition~\ref{de:improvement}, using $\mu f_{C_i}$ computed in
step~\ref{fi:alg-min-fix-above:2a}. Moreover
step~\ref{fi:alg-min-fix-above:2c} relies on
Theorem~\ref{th:fixpoint-sound-compl} and
Lemma~\ref{lem:fp-increase}. 

\begin{figure}
  \begin{tcolorbox}[width=1\textwidth, colframe=white]

    \begin{enumerate}
    \item
      \label{fi:alg-min-fix-above:1}
      Initialize: guess a strategy $C_0$, $i:=0$
    \item \textbf{iterate}
      \label{fi:alg-min-fix-above:2}
      \begin{enumerate}
      \item
        \label{fi:alg-min-fix-above:2a}
        determine $\mu f_{C_i}$ 
      \item
        \label{fi:alg-min-fix-above:2b}
        \textbf{if} $\minimp{C_{i}} \neq \emptyset$, let
        $C_{i+1} \in \minimp{C_i}$; $i:=i+1$; \textbf{goto} (a)
      \item
        \label{fi:alg-min-fix-above:2c}
        \textbf{else if} $\mu f_{C_i} \neq \mu f$ let
        $a\sqsubset \mu f_{C_i}$ be a pre-fixpoint of $f$ and determine
        $C_{i+1}$ via
        \[ C_{i+1}(y) = \arg\min_{h\in H_{\min}(y)} h(a)\]
        $i:=i+1$; \textbf{goto} 2.(a)        
      \item
        \label{fi:alg-min-fix-above:2d}
        \textbf{else} \textbf{stop}
      \end{enumerate}
    \end{enumerate}
  \end{tcolorbox}
  \caption{Computing the least fixpoint, from above}
  \label{fi:alg-min-fix-above}
  
\end{figure}

\begin{theoremrep}[least fixpoint, from above]
  \label{thm:algo-above}
  Let $Y$ be a finite set, $\monM$ a complete MV-chain,
  $f : \monM^Y \to \monM^Y$ be a non-expansive function and let
  $H_{\min}$ be a min-decomposition of $f$. The algorithm in
  Fig.~\ref{fi:alg-min-fix-above} terminates and computes $\mu f$.
\end{theoremrep}

\begin{proof}
  We first argue that in the iteration eventually
  $\minimp{C_i} = \emptyset$, hence we will reach step~\ref{fi:alg-min-fix-above:2c}, and this
  happens iff we have reached some fixpoint of $f$.

  In fact, for any $i$, if
  $\minimp{C_i} \neq \emptyset$, then one considers
  $C_{i+1} \in \minimp{C_i}$ and by Lemma~\ref{le:improve-above}
  $\mu f_{C_{i+1}}\sqsubset \mu f_{C_i}$. Thus the algorithm computes
  a strictly descending chain $\mu f_{C_i}$. This means that we
  certainly generate a new strategy at each iteration.
  Since there are only finitely many strategies, the iteration must stop
  at some point. When this happens we will have
  $\minimp{C_i} = \emptyset$ and thus, we will reach step~\ref{fi:alg-min-fix-above:2c}.
  By Lemma~\ref{lem:iff-above}, this happens if and only if
  $\mu f_{C_i}$ is a fixpoint of $f$.

  Now, by Theorem~\ref{th:fixpoint-sound-compl}, we
  can determine whether $\mu f_{C_i}$ and thus the algorithm only
  terminates when $\mu f$ is computed.

  Otherwise one considers $a \sqsubset \mu f_{C_i}$, a (pre-)fixpoint of
  $f$, which is given by
  Lemma~\ref{lem:fp-increase}.

  Due to Lemma~\ref{le:min-set-basic}(\ref{le:min-set-basic:b}) and
  the way the new strategy $C_{i+1}$ is defined we know that
  $f_{C_{i+1}}(a)(y) = C_{i+1}(y)(a) = f(a)(y) \sqsubseteq a(y)$ for
  all $y \in Y$, we have that $a$ is a (pre-)fixpoint also of $f_{C_{i+1}}$, and so
  $\mu f_{C_{i+1}} \sqsubseteq a \sqsubset \mu f_{C_i}$.

  Therefore, again we obtain a descending chain of least
  fixpoints. The number of strategies is finite, since $Y$ is finite
  and $H_{\min}(y)$ is finite for all $y \in Y$.  Thus we will at
  some point compute $\mu f_{C_j} = \mu f$ for some strategy $C_j$ and
  terminate since by Theorem~\ref{th:fixpoint-sound-compl} we can
  determine when this is the case.
\end{proof}

Termination easily follows from the fact that the number of
strategies is finite (since $Y$ is finite and $H_{\min}(y)$ is finite for all
$y \in Y$). Given that at any iteration the fixpoint decreases, no
strategy can be considered twice, and thus the number of iterations is
bounded by the number of strategies.

\begin{example}
  Let us revisit Example~\ref{ex:sg} and the fixpoint function
  $\mathcal{V}$ defined there. Its least fixpoint satisfies
  $\mu \mathcal{V}(\state{1}) = 1$ and
  $\mu \mathcal{V}(v) = \varepsilon$ for any
  $v\in V\setminus \{\state{1}\}$.

  The optimal strategy for {\Min} is to choose $\state{av}$
  as its successor since this forces {\Max} to exit the cycle formed
  by $\state{min},\state{av},\state{max}$ to $\state{\varepsilon}$, yielding a payoff of
  $\varepsilon$ for these states. If {\Max} would behave in a way
  that the play keeps cycling he would obtain a payoff of $0$, which
  is suboptimal.

  We now apply our algorithm. We start by guessing a strategy for
  {\Min}, so we assume $C_0(\state{min}) = h_{\state{1}}$, i.e.\
  $C_0 = C_1^{\min}$ (for the naming of the strategies we refer to
  Example~\ref{ex:sg}). The least fixpoint $\mu \mathcal{V}_{C_0}$ can
  be found by solving the following linear program:
  \begin{align*}
    {\min} & \sum_{v\in V} a(v) 
    &
      a(\state{1}) &= 1
    &
      a({\state{\varepsilon}}) &= \varepsilon
    &
      a(\state{av}) &= \frac{1}{2} a(\state{min}) + \frac{1}{2} a(\state{max})\\[-2mm]
    & &
    a(\state{max}) &\ge a({\state{\varepsilon}})
    &
      a(\state{max}) &\ge a(\state{av})
    &
      a(\state{min}) &= a(\state{1})
  \end{align*}
  with $0\le a(v) \le 1$ for $v\in V$, which yields
  $\mu \mathcal{V}_{C_0} ({\state{\varepsilon}}) = \varepsilon$ and
  $\mu \mathcal{V}_{C_0}(v) = 1$ for all
  $v\in V\setminus \{{\state{\varepsilon}}\}$. Now
  $\mu \mathcal{V}_{C_0}$ is a fixpoint of $\mathcal{V}$ -- but not
  the least -- and thus we find the vicious cycle formed by
  $\state{min}, \state{av},\state{max}$, i.e.\
  $\nu \mathcal{V}_\#^{\mu \mathcal{V}_{C_0}} = \{
  \state{min},\state{av},\state{max}\}$ and decrease the values of
  those states in $a$ by $\delta$, i.e.\ we obtain
  $a = \mu \mathcal{V}_{C_0} \ominus \delta_{\{\state{min},
    \state{av},\state{max}\}}$. This results in $a(\state{1}) = 1$,
  $a({\state{\varepsilon}}) = \varepsilon$ and $a(v) = 1-\delta$ for
  all $v\in V\setminus \{ \state{1}, \state{\varepsilon}\}$. Any
  $\delta\in (0,1-\varepsilon]$ is a valid choice.

  Computing $C_{1}(y) = \arg\min_{h\in H_{\min}(y)} h(a)$ yields the strategy
  $C_1 = C_2^{\min}$, i.e.\ $C_1(\state{min}) = h_{\state{av}}$. By
  linear programming (replace $a(\state{min}) = a(\state{1})$ by
  $a(\state{min}) = a(\state{av})$) we obtain
  $\nu \mathcal{V}_\#^{\mu f_{C_1}} = \emptyset$, thus
  $\mu \mathcal{V}_{C_1} = \mu \mathcal{V}$ and the algorithm
  terminates.
\end{example}

\subsection{Strategy iteration from below}
\label{SIfromAbove}

Here we present a different generalized strategy iteration algorithm approaching
the least fixpoint from below. Intuitively, now it is player {\Max}
who improves his strategy step by step, creating an ascending chain
of least fixpoints which reaches the least fixpoint
of the underlying function $f$. Despite the fact that in this case we
cannot get stuck at a fixpoint which is not the least, the correctness
argument is more involved.

We will deal with max-decompositions of a function and we will
need a notion of (stable) max-improvement which is naturally defined
as a dualisation
of the notion of (stable) min-improvement (Definition~\ref{de:improvement}).

\begin{definition}[max-improvement]
  \label{de:max-improvement}
  Let $Y$ be a finite set, $\monM$ a complete MV-chain,
  $f : \monM^Y \to \monM^Y$ a monotone function and $H_{\max}$ a
  max-decomposition. Given $C, C'$ strategies in $H_{\max}$, we say
  that $C'$ is a \emph{max-improvement} of $C$ if
  $\mu f_C \sqsubset f_{C'}(\mu f_C)$. It is called a \emph{stable}
  max-improvement if in addition $C'(y) = C(y)$ for all $y \in Y$ such
  that $f_{C'}(\mu f_C)(y) = \mu f_C(y)$. We denote by $\maximp{C}$
  (respectively $\maximps{C}$) the set of (stable) max-improvements of
  $C$.
\end{definition}

When iterating from above it was rather easy to show that given a
strategy $C$ and a min-improvement $C'$, the latter yields a smaller
least fixpoint $\mu f_{C'}\sqsubset \mu f_C$ (Lemma~\ref{le:improve-above}).
Observing that $\mu f_C$
is a pre-fixpoint of $f_{C'}$ was enough to prove this.

Here, however, we cannot simply dualise the argument. If $C'$ is a
max-improvement of $C$, we obtain that $\mu f_C$ is a post-fixpoint of
$f_{C'}$ which, in general, does not guarantee $\mu f_{C'}\sqsupset
\mu f_C$.
We have to resort to stable max-improvements and, in order to show
that such improvements in fact yield greater least fixpoints, we
need, again, to use the theory reviewed in
\S\ref{ss:upsidedown}. Hence, we have to work with
non-expansive functions $f\colon \mathbb{M}^Y\to \mathbb{M}^Y$ where
$\mathbb{M}$ is a complete MV-chain.

\begin{toappendix}
In order to prove that a stable max-improvement leads to greater least fixpoint,
we first need a generalisation of the soundness result in
Theorem~\ref{th:fixpoint-sound-compl} to the case
in which $a$ is a post-fixpoint instead of a fixpoint,
proved in~\cite{bekp:fixpoint-theory-upside-down} and recalled below.

\begin{theorem}[soundness for post-fixpoints
  \cite{bekp:fixpoint-theory-upside-down}]
  \label{thm:soundness}
  Let $\monM$ be a complete
  MV-chain, $Y$ a finite set and $f : \monM^Y\to \monM^Y$ be a
  non-expansive function.
  Given a post-fixpoint $a \in \monM^Y$ of $f$, let
  $\Ytop{Y}{a=f(a)} = \{ y \in \Ytop{Y}{a} \mid a(y) = f(a)(y) \}$.
  Let us 
  define $f^a_* : \Ytop{Y}{a=f(a)} \to \Ytop{Y}{a=f(a)}$ as
  $f_*^a(Y') = f^a_{\#}(Y') \cap \Ytop{Y}{a=f(a)}$, where
  $f^a_{\#}: \Pow{\Ytop{Y}{a}} \to \Pow{\Ytop{Y}{f(a)}}$ is the $a$-approximation of $f$.
  If $\nu f^a_* = \emptyset$
  then $a \sqsubseteq \mu f $.
\end{theorem}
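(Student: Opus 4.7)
I would prove the contrapositive: if $a \not\sqsubseteq \mu f$, then $\nu f^a_* \neq \emptyset$. Since $\monM$ is an MV-chain, $a \not\sqsubseteq \mu f$ yields a non-empty set of ``excess'' coordinates $U_0 := \{ y \in Y : a(y) \sqsupset \mu f(y) \} \subseteq \Ytop{Y}{a}$, and I would seek $U \subseteq \Ytop{Y}{a=f(a)}$ that witnesses $\nu f^a_* \neq \emptyset$.

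The first step is to reduce to the case where the excess lies in the tight set. Since $a$ is a post-fixpoint, the sequence $a \sqsubseteq f(a) \sqsubseteq f^2(a) \sqsubseteq \cdots$ is ascending. By monotonicity, being above $\mu f$ is preserved under $f$, so every iterate still has strict excess over $\mu f$ on all of $U_0$. If some $y \in U_0$ has $a(y) \sqsubset f(a)(y)$, I would replace $a$ by a suitable iterate $f^k(a)$ — it remains a post-fixpoint, it still differs from $\mu f$ on $U_0$, and it enlarges the tight set $\Ytop{Y}{a=f(a)}$. Iterating finitely often in the chain generated by $a$, $f(a)$ and $\mu f$, one stabilises at a post-fixpoint $a' \sqsupseteq a$ with $U_0 \subseteq \Ytop{Y}{a'=f(a')}$; relabelling, I may assume $U_0 \subseteq \Ytop{Y}{a=f(a)}$ and set $U := U_0$.

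The second step is to verify $U \subseteq f^a_\#(U)$. Pick $0 \sqsubset \delta \sqsubseteq \inc{f}{a}$ with $\delta \sqsubseteq a(y) \ominus \mu f(y)$ for all $y \in U$, and set $b := \alpha^{a,\delta}(U) = a \ominus \delta_U$, so that $\mu f \sqsubseteq b \sqsubseteq a$. By monotonicity, $\mu f = f(\mu f) \sqsubseteq f(b) \sqsubseteq f(a)$. The claim $y \in f^a_\#(U) = \gamma^{f(a),\delta}(f(b))$ is equivalent to $f(a)(y) \ominus f(b)(y) \sqsupseteq \delta$. If this failed at some $y \in U$, then $f(b)(y) \sqsupset a(y) \ominus \delta \sqsupseteq \mu f(y)$; cutting $b$ down by the meet with $\mu f$ and invoking non-expansiveness plus the chain structure of $\monM$ would then yield a pre-fixpoint of $f$ strictly below $\mu f$ at some coordinate, contradicting minimality of $\mu f$. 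Combined with $U \subseteq \Ytop{Y}{a=f(a)}$, this gives $U \subseteq f^a_*(U)$, so $\nu f^a_* \supseteq U$ is non-empty.

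The main obstacle is the propagation argument at the end of the second step, which is the same ``vicious cycle'' phenomenon underlying Theorem~\ref{th:fixpoint-sound-compl}: failure to propagate the $\delta$-decrease would reveal slack in the supporting flow of $a$, and the MV-chain structure together with non-expansiveness turn that slack into a strictly smaller pre-fixpoint than $\mu f$. Closely related is the soundness of the reduction in the first step — one must check that saturating $a$ to $a'$ preserves the hypothesis $\nu f^{a'}_* = \emptyset$, or, alternatively, argue directly with $a$ by applying the approximation machinery coordinate-wise. Once both points are settled, the exhibited $U$ is a non-empty post-fixpoint of $f^a_*$, contradicting $\nu f^a_* = \emptyset$ and establishing $a \sqsubseteq \mu f$.
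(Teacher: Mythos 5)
The paper does not actually prove this statement; it is recalled verbatim from the cited reference, so there is no in-paper proof to compare against. Judged on its own, your plan has the right overall shape — argue the contrapositive and exhibit a non-empty post-fixpoint of $f_*^a$ supported on coordinates where $a$ exceeds $\mu f$, using a $\delta$-decrease and non-expansiveness — but both of your key steps have genuine gaps, and both trace back to the same wrong choice of witness set. Step 1 does not go through: applying $f$ to a post-fixpoint does not enlarge the tight set ($a(y)=f(a)(y)$ in no way implies $f(a)(y)=f(f(a))(y)$, so $\Ytop{Y}{a=f(a)}$ can shrink under iteration); over $[0,1]$ the chain $f^k(a)$ need not stabilise in finitely many steps; and, as you yourself flag, even if a suitable $a'$ is reached, the hypothesis $\nu f_*^a=\emptyset$ concerns the approximation taken \emph{at} $a$, and nothing relates it to $\nu f_*^{a'}$. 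Step 2 is likewise not established: the assertion $\mu f \sqsubseteq b$ is false off $U$, where $a$ may lie strictly below $\mu f$, and the closing sentence ("cutting $b$ down by the meet with $\mu f$ \ldots would yield a pre-fixpoint strictly below $\mu f$") is a gesture, not an argument. The natural bound one gets from non-expansiveness only yields the propagation inequality $f(a)(y)\ominus f(b)(y)\sqsupseteq\delta$ at points where the excess $a(y)\ominus\mu f(y)$ is \emph{maximal}, not on all of $U_0$.

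Both problems disappear if you take as witness the argmax set $M=\{y \mid a(y)\ominus\mu f(y) = \theta\}$ with $\theta := \norm{a\ominus\mu f}\sqsupset 0$, rather than all of $U_0$. First, $M\subseteq\Ytop{Y}{a=f(a)}$ comes for free, with no saturation of $a$: non-expansiveness gives $f(a)(y)\ominus\mu f(y)\sqsubseteq\norm{f(a)\ominus f(\mu f)}\sqsubseteq\theta=a(y)\ominus\mu f(y)$, which together with $a(y)\sqsubseteq f(a)(y)$ and $\theta\sqsupset 0$ forces $a(y)=f(a)(y)$ for $y\in M$ (and $a(y)\sqsupseteq\theta\sqsupset 0$ puts $M$ inside $\Ytop{Y}{a}$). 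Second, propagation holds on $M$: set $b=a\ominus\delta_M$ and $c=a\sqcap\mu f$; then $c\sqsubseteq b\sqsubseteq a$, and for $\delta\sqsubseteq\inc{f}{a}$ chosen small enough that $a(y)\ominus\mu f(y)\sqsubseteq\theta\ominus\delta$ for all $y\notin M$, one gets $\norm{b\ominus c}\sqsubseteq\theta\ominus\delta$, hence $f(b)(y)\sqsubseteq f(c)(y)\oplus(\theta\ominus\delta)\sqsubseteq\mu f(y)\oplus(\theta\ominus\delta)=a(y)\ominus\delta=f(a)(y)\ominus\delta$ — where the last two equalities use precisely that $a(y)\ominus\mu f(y)=\theta$, i.e.\ maximality, which is what fails for the other points of $U_0$. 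This gives $M\subseteq f_\#^a(M)\cap\Ytop{Y}{a=f(a)}=f_*^a(M)$, so $\nu f_*^a\sqsupseteq M\neq\emptyset$, completing the contrapositive.
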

\end{toappendix}

\begin{lemmarep}[max-improvements increase fixpoints] 
  \label{lem:improve-below}
  Let $Y$ be a finite set, $\monM$ a complete MV-chain,
  $f : \monM^Y \to \monM^Y$ a non-expansive function and $H_{\max}$ a
  max-decomposition. Given a strategy $C$ in $H_{\max}$ and a stable
  max-improvement $C' \in \maximps{C}$, then
  $\mu f_{C} \sqsubset \mu f_{C'}$.  
\end{lemmarep}

\begin{proof}
  We show that
  $\nu(f_{C'})_*^{\mu f_C} = \emptyset$ and thus conclude that
  $\mu f_{C} \sqsubseteq \mu f_{C'}$ by
  Theorem~\ref{thm:soundness}. Recall that
  $(f_{C'})_*^{\mu f_C}\colon
  \Ytop{Y}{\mu f_C = f_{C'}(\mu f_C)} \to
  \Ytop{Y}{\mu f_C = f_{C'}(\mu f_C)}$, i.e., it restricts to those
  elements of $\mu f_C$ where $\mu f_C$ and $f_{C'}(\mu f_C)$ coincide.
    
  We proceed by showing that $(f_C)_*^{\mu f_C}$ and
  $(f_{C'})_*^{\mu f_C}$ agree on $\Ytop{Y}{\mu f_C=f_{C'}(\mu f_C)}$,
  which, by definition, is a subset of
  $\Ytop{Y}{\mu f_C} = \Ytop{Y}{\mu f_C=f_C(\mu f_C)}$ (remember that
  $\mu f_C$ is a fixpoint of $f_C$). It holds that
  \begin{eqnarray*}
    (f_C)_*^{\mu f_C}(Y') & = &
        \gamma^{f_C(\mu f_C),\delta}(f_C(
        \alpha^{\mu f_C,\delta}(Y'))) \\
      (f_{C'})_*^{\mu f_C}(Y') & = &
        \gamma^{f_{C'}(\mu f_C),\delta}(f_{C'}(
        \alpha^{\mu f_C,\delta}(Y'))) \cap
        \Ytop{Y}{\mu f_C=f_{C'}(\mu f_C)}
  \end{eqnarray*}
  for a suitable constant $\delta$ and if we choose $\delta$ small
  enough we can use the same constant in both cases.
    
  Now let
  $y\in \Ytop{Y}{\mu f_C=f_{C'}(\mu f_C)}$: by definition it
  holds that
  \begin{eqnarray*}
    && y\in (f_C)_*^{\mu f_C}(Y') = \gamma^{f_C(\mu f_C),\delta}(f_C(
    \alpha^{\mu f_C,\delta}(Y'))) \\
    & \iff &
    f_C(\mu f_C)(y) \ominus f_C(\alpha^{\mu f_C,\delta}(Y'))(y)
    \sqsupseteq \delta \\
    & \iff & C(y)(\mu f_C) \ominus C(y)(\alpha^{\mu f_C,\delta}(Y'))
    \sqsupseteq \delta
  \end{eqnarray*}
  
  Now, since $\mu f_C(y) = f_{C'}(\mu f_C)(y)$, by definition of
  stable max-improvement (remember that we require
  $C'\in \maximps{C}$) we have $C(y) = C'(y)$, and thus
  \begin{eqnarray*}
    && C(y)(\mu f_C) \ominus C(y)(\alpha^{\mu f_C,\delta}(Y'))
    \sqsupseteq \delta \\
    & \iff & C'(y)(\mu f_C) \ominus C'(y)(\alpha^{\mu f_C,\delta}(Y'))
    \sqsupseteq \delta \\
    & \iff & y\in (f_{C'})_*^{\mu f_C}(Y')
  \end{eqnarray*}
  Thus
  $\nu (f_{C'})_*^{\mu f_C}\subseteq \nu (f_{C})_*^{\mu
    f_C}=\emptyset$.
    
  In conclusion, we have that $\mu f_{C'} \sqsupseteq \mu f_C$
  and since $\mu f_C$ is not a fixpoint of $f_{C'}$ (because $C'$ is a
  max-improvement of $C$) we conclude $\mu f_{C'} \sqsupset \mu f_C$.
\end{proof}

\begin{example}
  We note that working with max-improvements which are stable is
  essential for the validity of Lemma~\ref{lem:improve-below} above.
  In fact, consider the SSG in Figure~\ref{fi:ssg-example-imp} where
  $\state{max}_1,\state{max}_2\in \mathit{MAX}$ and
  $\state{1} \in \mathit{SINK}$, with reward $1$. Let $C$ be the
  strategy for {\Max} where $\state{max}_1$ and $\state{max}_2$ have
  as successors $\state{1}$ and $\state{max}_2$, respectively. It is
  easy to see that
  $\mu \mathcal{V}_C(\state{1}) = \mu \mathcal{V}_C(\state{max}_1) =
  1$ and $\mu \mathcal{V}_C (\state{max}_2) = 0$. Now, an improvement
  in $\mathit{imp}_{\max}(C)$ can be the strategy $C'$ which chooses
  $\state{max}_1$ as a successor for both $\state{max}_1$ and
  $\state{max}_2$. Then we have
  $\mu \mathcal{V}_{C'}(\state{max}_1)= \mu
  \mathcal{V}_{C'}(\state{max}_2)=0$, hence
  $\mu \mathcal{V}_C \sqsupset \mu \mathcal{V}_{C'}$. The reason why
  this happens is that $C'$ is not a stable improvement of $C$ since
  it uselessly changes the successor of $\state{max}_1$ from
  $\state{1}$ to $\state{max}_1$, both mapped to $1$ by
  $\mu \mathcal{V}_C$.
  A stable improvement of $C$ is $C''$ where $\state{max}_1$ and
  $\state{max}_2$ have as successors $\state{1}$ and $\state{max}_1$,
  respectively. Then it can be seen that  $\mu \mathcal{V}_{C''}(v)= 1$ for all states.
    
  \begin{figure}[t]
    \begin{center}
      \begin{tikzpicture}
        \node (S1) at (0,0) [circle,draw,minimum width=2.5em,inner sep=0] {$\state{1}$};
        \node (MAX1) at (2,0) [circle,draw,minimum width=3.5em,inner sep=0] {$\state{max}_1$};
        \node (MAX2) at (4.2,0) [circle,draw,minimum width=3.5em,inner sep=0] {$\state{max}_2$};

        \draw[->] (MAX1) to (S1);
        \draw[->,bend left] (MAX2) to (MAX1);
	    	\draw[->,out=20,in=50,looseness=8] (MAX1) to (MAX1);
       	\draw[->,out=15,in=345,looseness=8] (MAX2) to (MAX2);
      \end{tikzpicture}
    \end{center}
    \caption{An example of a simple stochastic game where state
      $\state{1}$ has payoff $1$.}
    \label{fi:ssg-example-imp}
  \end{figure}
\end{example}

Relying on Lemma~\ref{lem:improve-below}, we can easily prove the
dual of Lemma~\ref{lem:iff-above}, showing that a strategy admits a
stable max-improvement as long as we have not reached a fixpoint of
$f$.

\begin{lemmarep}[max-improvements exist for non-fixpoints]
\label{lem:iff-below}
Let $Y$ be a finite set, $\monM$ a complete MV-chain,
$f : \monM^Y \to \monM^Y$ a monotone function and $H_{\max}$ a
max-decomposition. Given a strategy $C$ in $H_{\max}$, the following
are equivalent:
  \begin{enumerate}
  \item $\mu f_C \not\in \mathit{Fix}(f)$
  \item $\maximps{C} \neq \emptyset$
  \item $\mu f_C \sqsubset f(\mu f_C)$
  \end{enumerate}
\end{lemmarep}

\begin{proof}
  Adapt Lemma~\ref{lem:iff-above}, observing that
  whenever there is a max-improvement, i.e.\linebreak $\maximp{C} \neq \emptyset$,
  then there exists one which is stable, i.e.\ $\maximps{C} \neq \emptyset$,
  and, clearly, vice versa.
  This holds since, given a strategy $C' \in \maximp{C}$, for all $y \in Y$,
  we can define $C''(y) = C'(y)$ when
  $\mu f_C(y) \sqsubset f_{C'}(\mu f_C)(y)$, and $C''(y) = C(y)$
  otherwise. Then $\mu f_{C} \sqsubset f_{C''}(\mu f_C)$ and
  $C'' \in \maximps{C}$.
\end{proof}

To summarise, given a strategy $C$ with
$\mu f_C \notin \mathit{Fix}(f)$ we can construct a strategy $C'$ with
$\mu f_C\sqsubset \mu f_{C'}$. This creates an ascending
chain of least fixpoints and since there are only finitely many
strategies we will at some point find an optimal strategy $C^*$ with
$\mu f_{C^*}=\mu f$.

\begin{propositionrep}[least fixpoint from max-decomposition]
  \label{pr:below-muf}
  Let $Y$ be a finite set, $\monM$ a complete MV-chain,
  $f : \monM^Y \to \monM^Y$ a non-expansive function and let
  $H_{\max} : Y \to \Powf{\monM^Y \to \monM}$ be a max-decomposition of $f$.  
  Then $\mu f = \max \{\mu f_C \mid C \text{ is a strategy in } H_{\max}\}$.
\end{propositionrep}

\begin{proof}
  By Lemma~\ref{le:min-set-basic}(\ref{le:min-set-basic:a}) (adapted
  to the max case) we have $f_C\sqsubseteq f$ for any strategy $C$ in
  $H_{\max}$. Therefore, $\mu f_C \sqsubseteq \mu f$ and thus
  \begin{center}
    $\mu f \sqsupseteq \max \{\mu f_C \mid C \text{ is a strategy in } H_{\max}\}$.
  \end{center}
  Assume, by contradiction, that the inequality above is strict. Then
  for all strategies $C$ in $H_{\max}$ we have
  $\mu f_C\sqsubset \mu f$.  Then, by Lemma~\ref{lem:iff-below},
  each strategy admits a stable max-improvement. Starting from any
  strategy $C_0$ one could thus generates a sequence of stable
  max-improvements $C_1$, $C_2$, \ldots. Since by
  Lemma~\ref{lem:improve-below},
  $\mu f_{C_i} \sqsubset \mu f_{C_{i+1}}$, all these improvements
  would be different, thus contradicting the finiteness of the
  max-decomposition and hence the fact that there are finitely many
  strategies.
\end{proof}

The above results lead us to a generalised strategy iteration
algorithm which approaches the least fixpoint from below.

\begin{figure}[t]
  \begin{tcolorbox}[width=1\textwidth, colframe=white]
    \begin{enumerate}
    \item Initialize: guess a strategy $C_0$, $i:=0$
    \item \textbf{iterate}
      \begin{enumerate}
      \item determine $\mu f_{C_i}$
      \item \textbf{if} $\maximps{C_{i}} \neq \emptyset$, let
        $C_{i+1} \in \maximps{C_i}$; $i:=i+1$; \textbf{goto} (a)
      \item else \textbf{stop}
      \end{enumerate}
    \end{enumerate}
  \end{tcolorbox}
  \caption{Computing the least fixpoint, from below}
  \label{fi:alg-min-fix-below}
\end{figure}


\begin{theoremrep}[least fixpoint, from below]
  \label{thm:algo-below}
  Let $Y$ be a finite set, $\monM$ a complete MV-chain,
  $f : \monM^Y \to \monM^Y$ be a non-expansive function and let
  $H_{\max}$ be a max-decomposition of $f$. The algorithm in
  Fig.~\ref{fi:alg-min-fix-below} terminates and computes $\mu f$.
\end{theoremrep}

\begin{proof}
  For any $i$, if $\mu f_{C_i}$ is not a fixpoint of $f$, then, by
  Lemma~\ref{lem:iff-below}, $\maximps{C_i} \neq \emptyset$. By
  Lemma~\ref{lem:improve-below} $\mu f_{C_{i}}\sqsubset \mu f_{C_{i+1}}$
  for any $C_{i+1} \in \maximps{C_i}$. Thus the algorithm computes
  a strictly ascending chain $\mu f_{C_i}$. This means that we
  certainly generate a new strategy at each iteration.

  Since there are only finitely many strategies
  (as argued in the proof of Theorem~\ref{thm:algo-above}), the iteration must stop at some
  point. When this happens we will have $\maximps{C_i} = \emptyset$
  and thus, by Lemma~\ref{lem:iff-below}, $\mu f_{C_i}$ is a
  fixpoint of $f$. By Proposition~\ref{pr:below-muf} we conclude that
  $\mu f_{C_i} = \mu f$.
\end{proof}

The iteration from below may seem more appealing since it cannot get
stuck at any fixpoint of $f$. However, it has to be noted that the
computation of $\mu f_C$ -- for a chosen strategy $C$ - may be more
difficult than before, which is illustrated by the following example.

\begin{example}
  Let us apply the above algorithm to the SSG in
  Example~\ref{ex:sg}. Recall that the least fixpoint is given by
  $\mu \mathcal{V}(\state{1}) = 1$ and $\mu \mathcal{V}(v) = \varepsilon$ for all
  $v\in V\setminus \{ \state{1}\}$.
  
  We start by guessing a strategy for {\Max}, so we assume
  $C_0(\state{max}) = h_{\state{av}}$, i.e.\ $C_0 = C_2^{\state{max}}$. With
  this choice of strategy, {\Min} is able to keep the game going
  infinitely in the cycle formed by $\state{min},\state{av},\state{max}$ and thus
  payoff~$0$ is obtained. Now $\mu \mathcal{V}_{C_0}$ is given by
  $\mu \mathcal{V}_{C_0} (\state{\varepsilon})=\varepsilon$,
  $\mu \mathcal{V}_{C_0}(\state{1})$ $=1$ and
  $\mu \mathcal{V}_{C_0}(v) = 0$ for all
  $v\in V\setminus \{ {\state{\varepsilon}},\state{1}\}$. We note that
  $\mu \mathcal{V}_{C_0}$ cannot immediately be computed via linear
  programming, but there is a way to modify the fixpoint equation to
  have a unique fixpoint and hence linear programming can be used
  again \cite{bekp:fixpoint-theory-upside-down}. This is done by
  precomputing states from which {\Min} can force a non-terminating
  play and assigning payoff value $0$ to them.
  Next, {\Max} updates his strategy and we obtain $C_1 =
  C_1^{\max}$. As above we can compute
  $\mu \mathcal{V}_{C_1}$ -- which, this time, equals $\mu
  \mathcal{V}$ -- via linear programming.
\end{example}

\begin{remark}
  Given $\mu f$ (without the corresponding strategy) an interesting
  question is how one can derive optimal strategies for {\Min} or
  {\Max}. Note that each presented strategy iteration algorithm only
  produces an optimal strategy for one player, but not for the other.

  It is rather easy to find an optimal strategy with respect to
  $H_{\min}$. We can simply compute
  $C^*(y) = \arg\min_{h\in H_{\min}(y)} h(\mu f)$ which yields
  some optimal strategy $C^*$, i.e.\ $\mu f_{C^*} = \mu f$. It is
  enough to choose some minimum, even if this is ambiguous and there
  are several choices, each of which produces an optimal
  strategy. The strategy $C^*$ is optimal since $\mu f$ is a pre-fixpoint of
  $f_{C^*}$ and $\mu f = \mu f_{C^*}$ follows from
  Proposition~\ref{pr:above-muf}.

  On the other hand, given $\mu f$, we cannot easily obtain an optimal
  strategy in $H_{\max}$. We will discuss in \S\ref{ex:EG}
  (Example~\ref{ex:EG}) that defining
  $C^*(y) = \arg\max_{h\in H_{\max}(y)} h(\mu f)$ for an
  arbitrary $h$ where the value is maximal does not work in
  general.
\end{remark}

\section{Application: energy games}
\label{se:EG}

In this section we examine energy games
\cite{DBLP:conf/icalp/DorfmanKZ19} and show how both strategy
iterations in \S\ref{se:GSI} can be applied to solve energy games and
have the advantage of providing us not only with the value vector, but
also with a strategy, which is interesting, in particular, for
Player~1 ({\Max}).

Energy games are two-player games on finite graphs where Player~0
({\Min}) wants to keep the game going forever, while Player~1 ({\Max})
wants it to stop eventually. Each state belongs to one player where
he chooses the successor and each traversed edge reduces or increases
the energy level by some integer. The game stops when an edge is taken
which reduces the energy level below 0. The main question
is how
much initial energy is needed, such that Player~0 can keep the game
going forever.
It is possible to require an initial energy level of infinity.

\subsection{Introduction to energy games}
\label{se:eg-intro}

A game graph is a tuple $\Gamma = (V_0,V_1,E,w)$ where
$V = V_0\cup V_1$, with $V_0 \cap V_1 = \emptyset$, is the set of
states, $E\subseteq V\times V$ is the set of edges and
$w\colon E \to \mathbb{Z}$ is a weight function. We assume that each
state has at least one outgoing edge. We define
$\mathit{post}(v) = \{ v'\in V \mid (v,v')\in E\} \neq \emptyset$ for $v\in V$.
States in $V_0$ and $V_1$ are owned by Player~0 and Player~1
respectively. Moving from state $v$ to state $v'$ will change the
energy level by adding the value $w(v,v')$. If this value is positive,
some energy is gained, otherwise energy decreases. It is the aim of
Player~0 to keep the energy level from getting negative. An energy
game is an infinite play on a game graph $\Gamma$ and we note that
optimal positional strategies exist for both players~\cite{DBLP:conf/icalp/DorfmanKZ19}.

The solution of $\Gamma$ is a function
$g_\Gamma\colon V\to \mathbb{N}^\infty$ (where
$\mathbb{N}^\infty = \mathbb{N}\cup \{ \infty \}$) that assigns to
each state the least energy level which is sufficient for Player~0 to
keep the game going, independently of the chosen strategy of
Player~1. It is known that the solution is the least fixpoint of the
following function
$\bar{\mathcal{E}} \colon (\mathbb{N}^\infty)^V \to
(\mathbb{N}^\infty)^V$, defined as
\begin{align*}
  \bar{\mathcal{E}}(a)(v) =
  \begin{cases}
    \min\limits_{v' \in \mathit{post}(v)} \max\{a(v') - w(v,v'),0\} & \mbox{if }v\in V_0 \\
    \max\limits_{{v' \in \mathit{post}(v)}} \max\{a(v') - w(v,v'),0\} & \mbox{if }v\in V_1
  \end{cases}
\end{align*}
         
\begin{example}
  \label{ex:EG}
  Consider the following energy game, where it is intended that
  circular and rectangular states belong to Player~0 and Player~1,
  respectively.
  \begin{center}
      \begin{tikzpicture}
        \node (x) at (0,0) [rectangle,draw] {\begin{tabular}{c}
            $x$
          \end{tabular} };
        \node (u) at (0,-2) [circle,draw] {\begin{tabular}{c}
            $u$
          \end{tabular} };
        \node (y) at (2,0) [rectangle,draw] {\begin{tabular}{c}
            $y$
          \end{tabular} };
        \node (v) at (2,-2) [circle,draw] {\begin{tabular}{c}
            $v$
          \end{tabular} };
        
         \draw[->,thick,bend right] (x) to node [left]  {$-12$} (u);
        \draw[->,thick,bend right] (u) to node [right]  {$16$} (x);
        \draw[->,thick] (v) to node [below]  {$-8$} (u); 
        \draw[->,thick,   loop left] (u) to node [left]  {$0$} (u);
        \draw[->,thick,   loop right] (v) to node [right]  {$-2$} (v);
        \draw[->,thick,bend left] (x) to node [above]  {$-1$} (y);
        \draw[->,thick,bend left] (y) to node [above]  {$1$} (x);
        \draw[->,thick,bend left] (y) to node [right]  {$-9$} (v);
        \draw[->,thick,bend left] (v) to node [left]  {$8$} (y);        
      \end{tikzpicture}
  \end{center}
  
  The optimal strategy for Player~0 is
  to choose $u$ as the successor to $u$ and $v$. Thus $v$
  requires an initial energy of 8 to keep going forever. For $u$ an
  initial energy of 0 is sufficient.
  On the other hand, the optimal strategy for Player~1 is to choose
  $y$ as successor to $x$ and $v$ as successor to $y$. This results in
  a required initial energy of $17$ for $y$ and $18$ for $x$.

  Thus, we obtain as least fixpoint
  $g_\Gamma(x)=18,~ g_\Gamma(y) = 17,~ g_\Gamma(u) = 0,~ g_\Gamma(v) =
  8$. Note that, if from $u$ Player~0 would choose $x$, Player~1
  could keep the game in a negative cycle.

  Given only the least fixpoint $g_\Gamma$, the strategy
  of Player~1 is not deducible with a local reasoning,
  since from $y$ the choices $x,v$ are indistinguishable
  (in fact $g_\Gamma(y) = 17 = g_\Gamma(x)-1 = g_\Gamma(v)-(-9)$).
  However, if $x$ is chosen as
  successor to $y$ (and still $y$ as successor to $x$), we end up in a
  value vector where {\Min} needs $0$ initial energy in $y$ to
  keep going.
\end{example}


\subsection{Strategy iteration for energy games}
\label{se:strategy-iteration-eg}

In order to solve energy games in our framework, we have to consider
non-expansive functions over MV-algebras, however
  $\mathbb{N}^\infty$ is unfortunately not an MV-algebra. For this,
we use the results of~\cite{DBLP:conf/icalp/DorfmanKZ19}, where it is
shown that any energy game $\Gamma=(V_0,V_1,E,w)$ can be transformed
into an energy game $\Gamma'=(V_0',V_1',E',w')$ with finite values
only.
Concretely, this is done by adding an
``emergency exit'' for each state in $V_0$ guaranteeing a finite
amount of required energy to keep the game going forever.  The
solution $g_{\Gamma'}$ of $\Gamma'$ satisfies
$g_{\Gamma'}(v) < \infty$ for all $v\in V$ and the solution $g_\Gamma$
of $\Gamma$ can be easily reconstructed from $g_{\Gamma'}$. This
allows us to restrict to energy games with finite values,
where the solution is bounded by a suitable $k$.
In this setting,
letting $K=\{0,\dots, k\}$ and
$\ominus_\mathbb{Z}\colon K \times \mathbb{Z} \to K$ given
by $x\ominus_\mathbb{Z} y = \min \{ \max \{ x - y,0 \}, k\}$,
we can define $\mathcal{E} \colon K^V \to K^V$ for
$a\colon V\to K$ and $v\in V$ as 
\begin{align*}
  \mathcal{E} (a)(v) =
  \small
  \begin{cases}
    \min\limits_{(v,v')\in E} a(v')\ominus_\mathbb{Z} w(v,v') &
    \mbox{if }v\in V_0 \\
    \max\limits_{(v,v')\in E} a(v')\ominus_\mathbb{Z} w(v,v') &
    \mbox{if }v\in V_1
  \end{cases}
\end{align*}

\begin{lemmarep}[solution is least fixpoint of $\mathcal{E}$]
  \label{lem:E=g}
  Let $\Gamma$ be an energy game with finite values,
  bounded by $k$. Then $\mu \mathcal{E}=g_\Gamma$, i.e.\ the least
  fixpoint of $\mathcal{E}$ coincides with the solution of~$\Gamma$.
\end{lemmarep}

\begin{proof}
  First note that from known results on energy games
  \cite{DBLP:journals/fmsd/BrimCDGR11} the solution $g_\Gamma$ equals
  $\mu \bar{\mathcal{E}}$, the least fixpoint of $\bar{\mathcal{E}}$.
  
  The claim follows straightforwardly from the fact that, by
  distributivity, $\mathcal{E}(a) = \min\{
  \bar{\mathcal{E}}(a),k\}$. Hence $\mathcal{E} \le \bar{\mathcal{E}}$
  and so $\mu \mathcal{E} \le \mu \bar{\mathcal{E}}$. For the other
  direction observe that by assumption the solution $g_\Gamma$ is
  bounded by $k$ ($g_\Gamma \le k$). Hence
  $\bar{\mathcal{E}}(\mu \mathcal{E}) \le \bar{\mathcal{E}}(\mu
  \bar{\mathcal{E}}) = \mu \bar{\mathcal{E}} = g_\Gamma \le k$. Hence
  $\bar{\mathcal{E}}(\mu \mathcal{E}) = \min\{\bar{\mathcal{E}}(\mu
  \mathcal{E}),k\} = \mathcal{E}(\mu \mathcal{E}) = \mu \mathcal{E}$,
  which means that $\mu \mathcal{E}$ is some fixpoint of
  $\bar{\mathcal{E}}$, implying that
  $\mu \bar{\mathcal{E}} \le \mu \mathcal{E}$. \qedhere
\end{proof}

Recall from Example~\ref{ex:mv-chains} that $K$ is an
MV-chain.
\begin{toappendix}
  
\subsection{Deriving the approximation for energy games}
\label{se:approximation-eg}

In this section we show that $\mathcal{E} : K^V \to K^V$ is
non-expansive and we derive the approximation $\mathcal{E}^a_\#$ which is
required when iterating from above.  First, we disassemble
$\mathcal{E}\colon K^{V}\to K^{V}$ into smaller functions as explained
in Appendix~\ref{se:decomposition-approximation-f}.

We define
\begin{itemize}
\item $E_0 = \{ (v,v')\in E \mid v\in V_0 \}$ and $E_1 = \{
  (v,v')\in E \mid v\in V_1 \}$.
  
  Immediately we conclude $E_0 \cup E_1 = E$ and
  $E_0 \cap E_1 = \emptyset$.
\item \emph{Projections:} $\pi _i \colon E\to V$, $i=1,2$, where,
  given $e=(v,v')\in E$ we have $\pi_1(e) = v$ and $\pi_2(e) = v'$.
\item \emph{Restricted projections:} We define $\pi _i^j \colon E_j\to
  V$ where $\pi_i^j = (\pi_i)_{|_{E_j}}$ for $i\in\{1,2\}$, $j\in\{0,1\}$.
\item \emph{Subtraction of edge weights:} Given $w\colon E\to Z$,
  define $\mathit{sub}_w\colon K^E\to K^E$ via
  \[ \mathit{sub}_w (a)(e) = a(e) \ominus_\mathbb{Z} w(e) \] for
  $a\colon E\to K$ and $e\in E$.
\item \emph{Minimum and maximum functions:} We use the functions
  $\min_u$, $\max_u$ from
  Table~\ref{tab:basic-functions-approximations}, where $u$ is one of
  the projections defined above.
\end{itemize}

\begin{lemma}
  The function $\mathcal{E} \colon K^V\to K^V$ can be written as
  \[ \mathcal{E} = (\min\nolimits_{\pi_1^0} \uplus \max\nolimits_{\pi_1^1}) \circ
    \mathit{sub}_w \circ \pi_2^*\]
\end{lemma}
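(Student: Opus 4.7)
The plan is to verify the stated equality by a direct pointwise calculation, unfolding each constituent of the right-hand side from the innermost outward. Fix $a \in K^V$ and trace the value at an arbitrary $v \in V$ through the composition.

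First, I would apply the reindexing $\pi_2^*$: by the definition in Table~\ref{tab:basic-functions-approximations}, for every edge $e = (v,v') \in E$,
\[
\pi_2^*(a)(e) \;=\; a(\pi_2(e)) \;=\; a(v').
\]
Composing with $\mathit{sub}_w$ then yields
\[
(\mathit{sub}_w \circ \pi_2^*)(a)(e) \;=\; a(v') \ominus_{\mathbb{Z}} w(v,v').
\]

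Next I would apply the outer disjoint union $\min_{\pi_1^0} \uplus \max_{\pi_1^1}$. Since $E = E_0 \uplus E_1$ and $V = V_0 \uplus V_1$, and since $\pi_1^0$ and $\pi_1^1$ surject onto $V_0$ and $V_1$ respectively (each state has at least one outgoing edge), the disjoint-union semantics prescribes: for $v \in V_0$,
\[
\mathcal{E}(a)(v) \;=\; \min\nolimits_{\pi_1^0}\!\bigl(\mathit{sub}_w(\pi_2^*(a))\bigr)(v)
\;=\; \min_{e \in E_0,\, \pi_1^0(e)=v} a(\pi_2(e)) \ominus_{\mathbb{Z}} w(e),
\]
and dually for $v \in V_1$ with $\max$ and $\pi_1^1$. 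The key observation is that for $v \in V_0$ the set $\{e \in E_0 \mid \pi_1^0(e) = v\}$ is exactly $\{(v,v') \in E\}$, because every edge out of a state in $V_0$ lies in $E_0$ by the definition of $E_0$; the same holds with $V_1, E_1$. Hence the two branches collapse to $\min_{(v,v') \in E} a(v') \ominus_{\mathbb{Z}} w(v,v')$ and $\max_{(v,v') \in E} a(v') \ominus_{\mathbb{Z}} w(v,v')$, matching the two cases in the definition of $\mathcal{E}$.

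This is essentially a book-keeping exercise, and I do not foresee a real obstacle; the only thing to be careful about is the typing of the disjoint union, namely that $\min_{\pi_1^0}$ and $\max_{\pi_1^1}$ are regarded as maps with codomains $K^{V_0}$ and $K^{V_1}$ respectively, so that their disjoint union has codomain $K^V$ as required. As a byproduct, since $\pi_2^*$, $\mathit{sub}_w$, $\min_{\pi_1^0}$ and $\max_{\pi_1^1}$ are each non-expansive (the first three appear directly in Table~\ref{tab:basic-functions-approximations}; $\mathit{sub}_w$ is non-expansive because $\ominus_{\mathbb{Z}}$ by a constant is non-expansive on the MV-chain $K$), and non-expansiveness is preserved by composition and disjoint union, this decomposition simultaneously certifies that $\mathcal{E}$ is non-expansive and provides the recipe for its $a$-approximation $\mathcal{E}^a_\#$ via the compositional rules of Appendix~\ref{se:decomposition-approximation-f}.
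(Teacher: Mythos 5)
Your proof is correct and follows essentially the same route as the paper's: a pointwise unfolding of the composition from the inside out ($\pi_2^*$, then $\mathit{sub}_w$, then the disjoint union of $\min_{\pi_1^0}$ and $\max_{\pi_1^1}$), together with the observation that the fibre of $\pi_1^j$ over $v$ is exactly the set of outgoing edges of $v$. The remarks on typing of the disjoint union and on non-expansiveness as a byproduct are consistent with how the paper uses this decomposition.
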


\begin{proof}
  Given $a \colon V \to K$ and $v\in V$, we get
  \begin{align*}
    \big((\min\nolimits_{\pi_1^0} \uplus \max\nolimits_{\pi_1^1})
    \circ \mathit{sub}_w \circ \pi_2^*\big)(a)(v)
    &= \begin{cases} \min\limits_{\pi_1^0 (e) = v} (\mathit{sub}_w \circ \pi_2^*)(a)(e) \\
      \max\limits_{\pi_1^1 (e) = v} (\mathit{sub}_w \circ \pi_2^*)(a)(e) \end{cases} \\
                &= \begin{cases} \min\limits_{(v,v')\in E} \mathit{sub}_w \circ \pi_2^*(a)(v,v'), &\text{if}~v\in V_0 \\
      \max\limits_{(v,v')\in E} \mathit{sub}_w \circ \pi_2^*(a)(v,v'), &\text{if}~v\in V_1 \end{cases} \\
                      &= \begin{cases} \min\limits_{(v,v')\in E}  \pi_2^*(a)(v,v')\ominus_\mathbb{Z} w(v,v'), &\text{if}~v\in V_0 \\
      \max\limits_{(v,v')\in E} \pi_2^*(a)(v,v')\ominus_\mathbb{Z} w(v,v'), &\text{if}~v\in V_1 \end{cases} \\
                            &= \begin{cases} \min\limits_{(v,v')\in E}  a(v')\ominus_\mathbb{Z} w(v,v'), &\text{if}~v\in V_0 \\
      \max\limits_{(v,v')\in E} a(v')\ominus_\mathbb{Z} w(v,v'), &\text{if}~v\in V_1 \end{cases} \\
      &= \mathcal{E}(a)(v)
\end{align*}
\end{proof}

Next we show non-expansiveness of the function $\mathcal{E}$ by
viewing it as the the composition of non-expansive functions. Using
prior knowledge (non-expansiveness of reindexing, $\min$ and $\max$
proven in \cite{bekp:fixpoint-theory-upside-down}), it suffices to
show that $\mathit{sub}_w$ is non-expansive.

\begin{lemma}
  The function $\mathit{sub}_w\colon K^E \to K^E$, defined via
  $\mathit{sub}_w (a)(e) = a(e) \ominus_\mathbb{Z} w(e)$ for
  $a\colon E\to K$, $e\in E$ and $w\colon E\to \mathbb{Z}$, is
  non-expansive.
\end{lemma}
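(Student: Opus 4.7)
The plan is to prove non-expansiveness edgewise, exploiting the fact that $\mathit{sub}_w$ acts independently on each edge. Concretely, unfolding the definitions:
\begin{align*}
  \|\mathit{sub}_w(b) \ominus \mathit{sub}_w(a)\|
  &= \max_{e \in E} \bigl( (b(e) \ominus_{\mathbb{Z}} w(e)) \ominus (a(e) \ominus_{\mathbb{Z}} w(e)) \bigr),
\end{align*}
where the outer $\ominus$ denotes the MV-subtraction in $K$, i.e.\ $s \ominus t = \max\{s - t, 0\}$ for $s,t \in K$. Similarly $\|b \ominus a\| = \max_{e} (b(e) \ominus a(e))$. Hence it suffices to show that for each edge $e \in E$, writing $x = a(e)$, $y = b(e)$, $z = w(e)$, the pointwise inequality
\[
  (y \ominus_{\mathbb{Z}} z) \ominus (x \ominus_{\mathbb{Z}} z) \;\leq\; y \ominus x
\]
holds in $K$.

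The key observation I would use is that the map $g \colon \mathbb{Z} \to K$ defined by $g(t) = \min\{\max\{t, 0\}, k\}$ (clamping to $[0,k]$) is monotone and $1$-Lipschitz with respect to the standard metric on $\mathbb{Z}$: this is clear from the fact that $g$ is the composition of two clamping operations, each of which has slope $0$ or $1$. Since $t \ominus_{\mathbb{Z}} z = g(t - z)$, and translation by $-z$ is an isometry, the composite $t \mapsto t \ominus_{\mathbb{Z}} z$ is also monotone and $1$-Lipschitz.

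From this I would conclude by case distinction on the sign of $y - x$: if $y \leq x$, then by monotonicity $y \ominus_{\mathbb{Z}} z \leq x \ominus_{\mathbb{Z}} z$, so the left-hand side is $0 \leq y \ominus x$; if $y > x$, then again by monotonicity the left-hand side equals $(y \ominus_{\mathbb{Z}} z) - (x \ominus_{\mathbb{Z}} z)$, and by $1$-Lipschitzness this is bounded above by $y - x = y \ominus x$.

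No real obstacle is expected: the statement is a routine sanity check that clamping plus a constant shift does not enlarge pointwise distances, and the proof is essentially a one-line invocation of the $1$-Lipschitz property of the clamp. The only thing to be careful about is not confusing the integer subtraction $\ominus_{\mathbb{Z}}$ (used inside $\mathit{sub}_w$, and taking a $K$-value and an integer) with the MV-subtraction $\ominus$ in $K$ (used in the definition of the norm).
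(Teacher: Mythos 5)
Your proposal is correct, and it proves exactly the pointwise inequality that the paper's proof establishes, but it organizes the argument differently. The paper, after reducing to a single edge $e$ with $b(e) > a(e)$, runs a four-way case split on the sign of $w(e)$ and on whether the truncation at $0$ (resp.\ at $k$) is active, verifying the bound $\le b(e) - a(e)$ separately in each case. You instead factor $t \mapsto t \ominus_{\mathbb{Z}} z$ as the clamp $g(t) = \min\{\max\{t,0\},k\}$ precomposed with the translation $t \mapsto t - z$, and invoke monotonicity and $1$-Lipschitzness of $g$ once; the single inequality $|g(y-z) - g(x-z)| \le |y - x|$ subsumes all four of the paper's cases. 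This is shorter, makes the reason for the result transparent, and would generalize to any monotone $1$-Lipschitz post-processing, at the price of asserting (correctly, but without proof) that the composite of two clamps is $1$-Lipschitz --- which is itself a two-case check, so the total work is comparable. Your split on the sign of $y - x$ is the right way to handle the distinction between $\ominus_{\mathbb{Z}}$ and the MV-subtraction $\ominus$ in $K$: in the nontrivial case monotonicity guarantees that the latter coincides with ordinary integer subtraction, which is exactly what the Lipschitz bound needs.
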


\begin{proof}
  Let $w\colon E\to \mathbb{Z}$, $a,b\colon E\to K$. Without loss of
  generality, assume
  \[ \parallel \mathit{sub}_w(b) \ominus \mathit{sub}_w(a) \parallel =
    \mathit{sub}_w(b)(e) \ominus \mathit{sub}_w(a)(e)= (b(e)
    \ominus_\mathbb{Z} w(e)) \ominus (a(e)\ominus_\mathbb{Z} w(e))\]
  for some $e\in E$. We omit the trivial case, i.e.\ we assume
  $\parallel \mathit{sub}_w(b) \ominus \mathit{sub}_w(a) \parallel >
  0$. Thus $b(e) > a(e)$ has to hold by monotonicity.

   We make the following distinction of cases.
  \begin{enumerate}
  \item $w(e) \geq 0 \land a(e) \geq w(e)$: 
  \begin{align*}
   &(b(e) \ominus_\mathbb{Z} w(e)) \ominus (a(e)\ominus_\mathbb{Z}
   w(e)) \\
   &= (b(e) - w(e)) - (a(e) - w(e)) &&\mbox{[$b(e) > a(e)\geq w(e)$]} \\
   &= b(e) - a(e) \\
   &\leq \parallel b\ominus a\parallel 
    \end{align*}  
    
      \item $w(e) \geq 0 \land a(e) < w(e)$: 
  \begin{align*}
   &(b(e) \ominus_\mathbb{Z} w(e)) \ominus (a(e)\ominus_\mathbb{Z}
   w(e)) \\
   &= (b(e) \ominus w(e)) \ominus 0 &&\mbox{[$a(e) < w(e)$]} \\
   &\leq  b(e) - a(e) &&\mbox{[$a(e) < w(e),~a(e) < b(e)$]} \\
   &\leq \parallel b\ominus a\parallel 
    \end{align*}  
    
  \item $w(e) < 0 \land b(e) - w(e) \leq k$: 
  \begin{align*}
   &(b(e) \ominus_\mathbb{Z} w(e)) \ominus (a(e)\ominus_\mathbb{Z} w(e)) \\
   &= (b(e) - w(e)) - (a(e) - w(e)) &&\mbox{[$b(e) > a(e)$]} \\
   &= b(e) - a(e) \\
   &\leq \parallel b\ominus a\parallel 
    \end{align*} 
    
              \item $w(e) < 0 \land b(e) - w(e) > k$: 
  \begin{align*}
   &(b(e) \ominus_\mathbb{Z} w(e)) \ominus (a(e)\ominus_\mathbb{Z}
   w(e)) \\
   &= k - (a(e) \ominus_\mathbb{Z} w(e)) &&\mbox{[$a(e) \ominus_\mathbb{Z} w(e) \in K$]}\\
    &\leq k -  (a(e)\ominus_\mathbb{Z} (b(e)-k)) &&\mbox{[$w(e) < b(e)-k$]}  \\
    &= k - \min \{ \max \{ a(e) - b(e) + k ,0\} ,k\} \\
    &= k - \min \{ a(e)-b(e)+k , k\} &&\mbox{[$k \geq b(e)$]} \\
    &= k - (a(e)-b(e)+k) &&\mbox{[$b(e) > a(e)$]} \\
    &= b(e) -a(e) \\
    &\leq \parallel b\ominus a\parallel 
    \end{align*} 
  \end{enumerate}
\end{proof}
\end{toappendix}
Moreover $\mathcal{E}: K^V \to K^V$ can be proved to be non-expansive
  by showing that it can be expressed in term of basic functions which are known or easily shown to be non-expansive and exploiting the fact that non-expansiveness is preserved by composition (see Appendix~\ref{se:approximation-eg} \pb{for more details}).
 Thus both generalised strategy iteration approaches in \S\ref{se:GSI},
from below and from above, can be applied for determining
$\mu \mathcal{E}$, i.e., the solution of $\Gamma$.


\begin{toappendix}
\begin{propositionrep}[non-expansiveness]
  The function $\mathcal{E} : K^V \to K^V$ is
  non-expansive.
\end{propositionrep}

\begin{proof}
  Immediate from the decomposition of $\mathcal{E}$ proved above and the fact that, as shown in~\cite{bekp:fixpoint-theory-upside-down}, non-expansiveness is preserved by composition.
\end{proof}

Next we determine the approximation $(\mathit{sub}_w)_\#^a$.

\begin{lemma}
  Given $w\colon E \to \mathbb{Z}$ and $a\colon E \to K$ the
  approximation
  $(\mathit{sub}_w)_\#^a \colon K^{[E]^a}\to
  K^{[E]^{\mathit{sub}_w(a)}}$ of $\mathit{sub}_w\colon K^E \to K^E$,
  is given by
  \[ (\mathit{sub}_w)_\#^a (E') = \{ e\in E' \mid 0< a(e) - w(e) \leq
    k \} \] for $E'\subseteq [E]^a$.
\end{lemma}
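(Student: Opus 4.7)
The plan is to unfold the definition $(\mathit{sub}_w)_\#^{a,\delta} = \gamma^{\mathit{sub}_w(a),\delta} \circ \mathit{sub}_w \circ \alpha^{a,\delta}$ for a sufficiently small $\delta > 0$, and then determine by a case analysis which $e \in E'$ survive the $\gamma$-condition $\mathit{sub}_w(a)(e) \ominus \mathit{sub}_w(b)(e) \sqsupseteq \delta$, where $b = \alpha^{a,\delta}(E') = a \ominus \delta_{E'}$. The witness value $\inc{\mathit{sub}_w}{a}$ can be taken as any positive $\delta$ that is strictly smaller than $\delta^a$, than every positive quantity $a(e) - w(e)$, and than every positive excess $a(e) - w(e) - k$ over $e \in E$; this is a minimum of finitely many strictly positive numbers, hence strictly positive.

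First I would dispatch the trivial direction: for $e \notin E'$ we have $b(e) = a(e)$, so $\mathit{sub}_w(b)(e) = \mathit{sub}_w(a)(e)$ and the $\gamma$-condition fails. Hence only elements of $E'$ can enter the output, and for any such $e$, since $\delta \leq \delta^a$, we have $b(e) = a(e) - \delta \geq 0$, so the $\ominus$ on $a$ behaves as integer subtraction.

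The core of the argument is a three-way split on where $a(e) - w(e)$ sits relative to the saturation thresholds $0$ and $k$ of $\ominus_\mathbb{Z}$. If $a(e) - w(e) \leq 0$, then $\mathit{sub}_w(a)(e) = 0$, so $e \notin \Ytop{E}{\mathit{sub}_w(a)}$ and $e$ is excluded from the codomain of $\gamma$; this matches the failure of $0 < a(e) - w(e)$ in the claimed set. If $a(e) - w(e) > k$, then $\mathit{sub}_w(a)(e) = k$ and, since $\delta < a(e) - w(e) - k$, also $a(e) - \delta - w(e) > k$ so $\mathit{sub}_w(b)(e) = k$; the difference is $0$ and $e$ is excluded, again consistent with the $\leq k$ constraint in the claim. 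In the remaining case $0 < a(e) - w(e) \leq k$, we have $\mathit{sub}_w(a)(e) = a(e) - w(e)$, and since $\delta < a(e) - w(e)$ also $0 < a(e) - \delta - w(e) \leq k$, giving $\mathit{sub}_w(b)(e) = a(e) - w(e) - \delta$, so the difference equals $\delta$ exactly and $e$ is in the output.

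The main obstacle is purely the bookkeeping on $\delta$: one must pick it smaller than all the finite positive margins arising above so that no saturation of $\ominus_\mathbb{Z}$ is crossed by the $\delta$-perturbation, and one must verify that the resulting set does not depend on $\delta$ in that range. Since all relevant margins range over the finite set $E$ and are strictly positive whenever they occur, the minimum exists and is strictly positive, and for every $\delta$ below this threshold the computation produces the same set $\{ e \in E' \mid 0 < a(e) - w(e) \leq k \}$. This is therefore $(\mathit{sub}_w)^a_\#(E')$.
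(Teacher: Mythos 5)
Your overall strategy is exactly the paper's: unfold $(\mathit{sub}_w)_\#^{a,\delta} = \gamma^{\mathit{sub}_w(a),\delta} \circ \mathit{sub}_w \circ \alpha^{a,\delta}$, dispatch $e \notin E'$, and split on where $a(e) - w(e)$ lies relative to the saturation thresholds $0$ and $k$. The case analysis itself is correct and matches the paper's computation case by case.

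There is, however, one concrete slip in your handling of $\delta$. You propose to take $\delta$ \emph{strictly} smaller than $\delta^a$, than every positive margin $a(e)-w(e)$, and than every positive excess $a(e)-w(e)-k$. But here $\monM = K = \{0,\dots,k\}$ is a discrete integer chain, so $\delta$ must be a positive integer; whenever any of these margins equals $1$ (e.g.\ $\delta^a = 1$, or some $a(e) - w(e) = 1$), there is no element of $K$ with $0 \sqsubset \delta \sqsubset 1$, and your witness does not exist. The intuition of "shrink $\delta$ below all the finitely many positive margins" is a real-valued reflex that fails in this discrete setting. The fix is immediate and is what the paper does: all the margins in question are positive \emph{integers}, hence at least $1$, so the non-strict bounds $\delta \sqsubseteq \delta^a$, $\delta \leq a(e)-w(e)$ and $\delta \leq a(e)-w(e)-k$ are all satisfied by $\delta = 1$, the minimal positive element of $K$; and a glance at your own case analysis shows that non-strict inequalities suffice everywhere (in the middle case one only needs $a(e)-\delta-w(e) \geq 0$, and in the overflow case only $a(e)-\delta-w(e) \geq k$). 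With that one-line repair your argument is the paper's proof.
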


\begin{proof}
  Note that here the minimal possible decrease is
  $\delta=1 \le \delta^a$, which we will use in the following

  Now:
  \begin{align*}
    (\mathit{sub}_w)_\#^{a,\delta}(E') &=
    \{ e\in [E]^{\mathit{sub}_w(a)}\mid \mathit{sub}_w(a)(e) \ominus \mathit{sub}_w(a\ominus \delta_{E'})(e)\ge \delta \} \\
    &= \{ e\in [E]^{\mathit{sub}_w(a)}\mid (a(e)\ominus_\mathbb{Z}
    w(e)) \ominus ((a\ominus \delta_{E'})(e)\ominus_\mathbb{Z}
    w(e))\ge \delta \}
  \end{align*}  
  First note, that whenever $e\notin E'$ then
  \[ (a(e)\ominus_\mathbb{Z} w(e)) \ominus ((a\ominus
    \delta_{E'})(e)\ominus_\mathbb{Z} w(e)) =(a(e)\ominus_\mathbb{Z}
    w(e)) \ominus (a(e)\ominus_\mathbb{Z} w(e)) = 0 < \delta. \] 
    

  Now let $e\in [E']^{\mathit{sub}_w(a)}$, i.e.\
  $a(e) - w(e) > 0$. Whenever $a(e) - w(e) > k$ we obtain
  \[ (a(e)\ominus_\mathbb{Z} w(e)) \ominus ((a\ominus
    \delta_{E'})(e)\ominus_\mathbb{Z} w(e)) = k - \min\{a(e)-\delta -
    w(e),k\} < k - \min\{ k-\delta,k\} \le \delta. \]
    
  If on the other hand $e\in E'$ with
  $0 < a(e) - w(e) \leq k$, we have:
  \[ (a(e)\ominus_\mathbb{Z} w(e)) \ominus ((a\ominus
    \delta_{E'})(e)\ominus_\mathbb{Z} w(e)) =(a(e)- w(e)) -
    (a(e)-\delta - w(e)) = \delta, \] since $E' \subseteq [E]^a$ and
  by choice of $\delta$, $a(e)-\delta -w(e) \geq 0$ holds.

  To summarize we obtain
  \[ (\mathit{sub}_w)_\#^a (E') = \{ e\in [E']^{\mathit{sub}_w(a)}
    \mid a(e) - w(e) \leq k \} = \{ e \in E' \mid 0 < a(e) - w(e) \leq
    k \}. \]
\end{proof}

Now we are able to derive $\mathcal{E}_\#^a$.

\begin{lemma}
  Let $V'\subseteq [V]^a$ then $v\in \mathcal{E}_\#^a (V')$ if
  $v\in [V]^{\mathcal{E}(a)}$ and
  \begin{itemize}
  \item whenever $v\in V_0$ there exists some $(v,v'')\in E$ with
    $\min\limits_{(v,v')\in E} a(v')\ominus_\mathbb{Z} w(v,v') =
    a(v'')\ominus_\mathbb{Z} w(v,v'')$,
    $0 < a(v'') - w(v,v'')
    \leq k$ and $v''\in V'$
  \item whenever $v\in V_1$: if $(v,v'')\in E$ with
    $\max\limits_{(v,v')\in E} a(v')\ominus_\mathbb{Z} w(v,v') =
    a(v'')\ominus_\mathbb{Z} w(v,v'')$ then
    $0 < a(v'') - w(v,v'') \leq k$ and $v''\in V'$
  \end{itemize}
\end{lemma}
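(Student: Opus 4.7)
The plan is to derive $\mathcal{E}_\#^a$ by mechanically unfolding the compositional decomposition $\mathcal{E} = (\min\nolimits_{\pi_1^0} \uplus \max\nolimits_{\pi_1^1}) \circ \mathit{sub}_w \circ \pi_2^*$ established in the previous lemma, using the compositionality law $(h \circ g)_\#^a = h_\#^{g(a)} \circ g_\#^a$ together with the approximation formulas for reindexing, $\min_\mathcal{R}$, $\max_\mathcal{R}$ and disjoint union listed in Table~\ref{tab:basic-functions-approximations}, and the characterization of $(\mathit{sub}_w)_\#^a$ just proved. The result will then be read off by stringing these together.

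First, starting from $V' \subseteq [V]^a$, I apply $(\pi_2^*)_\#^a$: by the reindexing rule this produces $(\pi_2)^{-1}(V') \cap [E]^{\pi_2^*(a)} = \{(v,v') \in E \mid v' \in V',\ a(v') > 0\}$. Since $V' \subseteq [V]^a$ already forces $a(v') > 0$, this is simply the set $E_{V'} = \{(v,v') \in E \mid v' \in V'\}$. Second, I apply $(\mathit{sub}_w)_\#^{\pi_2^*(a)}$ to $E_{V'}$; by the preceding lemma (instantiated with $\pi_2^*(a)$ in the role of $a$, so that the entry at edge $e=(v,v')$ is $a(v')$) this restricts to those edges $(v,v') \in E_{V'}$ satisfying $0 < a(v') - w(v,v') \leq k$. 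Call this set $E^*$.

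Third, I apply $(\min\nolimits_{\pi_1^0} \uplus \max\nolimits_{\pi_1^1})_\#^{\mathit{sub}_w(\pi_2^*(a))}$ to $E^*$ using the disjoint-union rule, which splits according to whether $v \in V_0$ or $v \in V_1$. For $v \in V_0$ the min-rule keeps $v \in [V]^{\mathcal{E}(a)}$ exactly when $\arg\min_{(v,v')\in E}\bigl(a(v')\ominus_\mathbb{Z} w(v,v')\bigr) \cap E^* \neq \emptyset$, i.e.\ there exists some edge $(v,v'') \in E$ that realises the minimum and lies in $E^*$, which unfolds precisely to the stated conditions. For $v \in V_1$ the max-rule keeps $v$ exactly when $\arg\max_{(v,v')\in E}\bigl(a(v')\ominus_\mathbb{Z} w(v,v')\bigr) \subseteq E^*$, i.e.\ every edge $(v,v'')$ realising the maximum satisfies $0 < a(v'') - w(v,v'') \leq k$ and $v'' \in V'$, matching the second bullet.

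The proof is essentially a bookkeeping exercise; the only mildly delicate point will be to ensure, at each stage of the composition, that the codomain restriction $[\,\cdot\,]^{f(a)}$ of one approximation matches the domain restriction of the next, so that the formulas of Table~\ref{tab:basic-functions-approximations} compose correctly. Once this is tracked, the characterization in the statement follows directly.
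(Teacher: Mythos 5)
Your proposal is correct and follows essentially the same route as the paper: the paper's proof is exactly this computation, stringing together the compositionality law with the reindexing, $\mathit{sub}_w$, and $\min$/$\max$ approximation rules to arrive at the stated case distinction for $V_0$ and $V_1$. The bookkeeping points you flag (the codomain restrictions matching up, and $V'\subseteq [V]^a$ making the intersection with $[E]^{\pi_2^*(a)}$ vacuous) are handled implicitly in the paper and pose no obstacle.
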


\begin{proof}
  We have
  \begin{align*}
    \mathcal{E}_\#^a (V') = \{ v\in [V]^{\mathcal{E}(a)} \mid &(v\in V_0 \land \mathit{Min}_{{\mathit{sub}_w\circ \pi_2^*}(a)\mid_{E_0}} \cap {\mathit{sub}_w}_\#^{\pi_2^*(a)}(\pi_2^{-1}(V')) \neq \emptyset) \\
    \lor &(v\in V_1 \land \mathit{Max}_{{\mathit{sub}_w\circ \pi_2^*}(a)\mid_{E_1}} \subseteq {\mathit{sub}_w}_\#^{\pi_2^*(a)}(\pi_2^{-1}(V')))\} \\
    = \{ v\in [V]^{\mathcal{E}(a)} \mid &(v\in V_0 \land \mathit{Min}_{{\mathit{sub}_w\circ \pi_2^*}(a)\mid_{E_0}} \\
    &\cap \{ (v,v')\in E \mid 0 < a(v') - w(v,v') \leq k,~v'\in V'\} \neq \emptyset) \\
    \lor &(v\in V_1 \land \mathit{Max}_{{\mathit{sub}_w\circ \pi_2^*}(a)\mid_{E_1}} \\ &\subseteq  \{ (v,v')\in E \mid 0 < a(v') - w(v,v') \leq k,~v'\in V'\}\} \\
    = \{ v\in [V]^{\mathcal{E}(a)} \mid \quad &(v\in V_0 \land \text{ there exists some $(v,v'')\in E$ with }\\
    &\qquad\min_{(v,v')\in E} a(v')\ominus_\mathbb{Z} w(v,v') =  a(v'')\ominus_\mathbb{Z} w(v,v''),\\
    &\qquad 0 < a(v'') - w(v,v'') \leq k \text{ and } v''\in V' ) \\
    & \lor (v\in V_1 \land \text{ if $(v,v'')\in E$ with } \\
    & \qquad
    \max_{(v,v')\in E} a(v')\ominus_\mathbb{Z} w(v,v') =  a(v'')\ominus_\mathbb{Z} w(v,v''),\\
    &\qquad \text{ then } 0 < a(v'') - w(v,v'') \leq k \text{ and }
    v''\in V') \}
  \end{align*}
\end{proof}

On fixpoints this simplifies to
$v\in \mathcal{E}_\#^a (V')$ if $v\in [V]^a$ and
\begin{itemize}
\item $v\in V_0$: there exists some $(v,v')\in E$ with
  $a(v) = a(v')\ominus_\mathbb{Z} w(v,v')$, 
  $0 < a(v') - w(v,v') \leq
  k$ and $v'\in V'$
\item $v\in V_1$: if $(v,v')\in E$ with $a(v) = a(v')\ominus_\mathbb{Z} w(v,v')$
  then
  $0 < a(v') - w(v,v') \leq
  k$ and $v'\in V'$
\end{itemize}
The smallest propagation of~$1$ can always be used to skip fixpoints,
but in order to progress faster, a valid choice for
the decrease $\delta$ is
\begin{eqnarray*}
  \delta & = & \min \{ \{ a(v')- w(v,v')\mid a(v') >
  w(v,v'),~(v,v')\in E\} \\
  && \qquad \cup \{ a(v)- a(v') \mid a(v) >
  a(v'),~(v,v')\in E\} \cup \{\delta^a\} \}.
\end{eqnarray*}
\end{toappendix}

Observe that the algorithms do not only compute $\mu \mathcal{E}$, but
also provide an optimal strategy, for Player~0 when approaching from
above and for Player~1 when approaching from below. The second case is
of particular interest as it derives an optimal strategy for Player~1,
which is often not treated in the literature (we are only aware of~\cite{BrimChal:2010}).

We also remark that, when performing iteration from above or below, at
each iteration, once a strategy $C$ for Player~0 is fixed, we need to
compute $\mu \mathcal{E}_C$. This can be done via linear
programming, however it turns out that it is more efficient to use
some form of value iteration, due to finiteness of the MV-algebra
$\{0,\dots,k\}$.

Appendix~\ref{se:approximation-eg} \pb{also} spells out the
\pb{construction of the} approximation $\mathcal{E}_\#^a$ of the function $\mathcal{E}$ which
-- according to the theory in \S\ref{ss:upsidedown} -- can be used for
checking whether its least fixpoint has already been reached in
strategy iteration from above. In
Appendix~\ref{se:comparison-energy-games} we analyse known
algorithms for solving energy games and compare their runtime to both
kinds of strategy iteration.  While other algorithms might in some
cases have better runtimes, strategy iteration has the advantage of
providing the optimal strategy.
\begin{toappendix}
  
\subsection{Comparison to the state of art and runtime results (energy
  games)}
\label{se:comparison-energy-games}

In this section we briefly explain the transformation to energy games with finite values and depict two other algorithms to solve energy games. In the end we compare runtimes of our strategy algorithms to these known algorithms.

\paragraph*{Transformation into a game with finite values.}
We quickly detail the transformation from any energy game $\Gamma = (V_0,V_1,E,w)$ into a game $\Gamma' = (V_0',V_1',E',w')$ with finite  values presented in \cite{DBLP:conf/icalp/DorfmanKZ19}. The basic premise is that any play on $\Gamma'$ will not run in a negative cycle if Player~0 plays optimally. The transformation entails two simple steps:
\begin{enumerate}
\item Find cycles of negative weight exclusively consisting of states controlled by Player~1 and all states in $V_1$ which can reach any of these negative Player~1-cycles along a path consisting of only states controlled by Player~1. Remove all these states and any incoming and outgoing edges to these removed states. It is rather clear that any removed state requires an initial infinite energy.
\item Add a sink state $s$, the edge $(s,s)$ and incoming edges
  $(v,s)$ for all $v\in V_0$. The new edge weights are given by
  $w'(s,s) = 0$ and $w'(v,s) = - 2\cdot n \cdot W$. $s$ serves as an
  ``emergency'' exit for Player~0 which allows him to avoid negative
  cycles.
\end{enumerate}
It is easy to compute $\mu \bar{\mathcal{E}}$ (solution of $\Gamma$)
given $\mu \mathcal{E}'$ (solution to $\Gamma'$ with finite values
where $K=\{ 0,\dots, 3\cdot n\cdot W\}$). To this end
\cite{DBLP:conf/icalp/DorfmanKZ19} proved the following:
\[ \mu \bar{\mathcal{E}}(v) = \begin{cases} \mu \mathcal{E}'(v),~\text{if}~\mu \mathcal{E}'(v) < n\cdot W \\ \infty ,~\text{otherwise} \end{cases} \]

Both strategy iterations require a transformation if the solution of $\Gamma$ is not finite whereas the following two algorithms presented in this section do not.

\begin{example}
\label{ex:EG3a}
  The energy game $\Gamma$ to the left is transformed to an energy game
  with finite values $\Gamma'$ as follows:
 \begin{center}
   \begin{tikzpicture}
        \node (u) at (2,0) [circle,draw] {\begin{tabular}{c}
            $u$
          \end{tabular} };
        \node (v) at (0,-2) [circle,draw] {\begin{tabular}{c}
            $v$
          \end{tabular} };
        \node (x) at (0,0) [rectangle,draw] {\begin{tabular}{c}
            $x$
          \end{tabular} };
                  \node (y) at (-2,0) [rectangle,draw] {\begin{tabular}{c}
            $y$
          \end{tabular} };
        \node (z) at (2,-2) [rectangle,draw] {\begin{tabular}{c}
            $z$
          \end{tabular} };

\draw[->,   loop right] (u) to node [right]  {$-1$} (u);

        \draw[->,thick,bend left] (z) to node [below]  {$4$} (v);
        \draw[->,thick,bend left] (v) to node [above]  {$-3$} (z);
                        \draw[->,thick,bend left] (x) to node [below]  {$2$} (y);
        \draw[->,thick,bend left] (y) to node [above]  {$-3$} (x);
        \draw[->] (x) to node [above]  {$-4$} (u);
        \draw[->] (v) to node [right]  {$2$} (x);
        \node (arrow) at (5,-1) [] {\begin{tabular}{c}
            {\Huge {\red $\rightarrow$}}
          \end{tabular} };
      \end{tikzpicture}
      \begin{tikzpicture}
        \node (u) at (2,0) [circle,draw] {\begin{tabular}{c}
            $u$
          \end{tabular} };
        \node (s) at (0,0) [circle,draw] {\begin{tabular}{c}
            $s$
          \end{tabular} };
        \node (v) at (0,-2) [circle,draw] {\begin{tabular}{c}
            $v$
          \end{tabular} };

        \node (z) at (2,-2) [rectangle,draw] {\begin{tabular}{c}
            $z$
          \end{tabular} };
        
\draw[->,   loop left] (s) to node [left]  {$0$} (s);
\draw[->,   loop right] (u) to node [right]  {$-1$} (u);
\draw[->] (u) to node [above]  {$-20$} (s);
\draw[->] (v) to node [left]  {$-20$} (s);
        \draw[->,thick,bend left] (z) to node [below]  {$4$} (v);
        \draw[->,thick,bend left] (v) to node [above]  {$-3$} (z);

      \end{tikzpicture}
  \end{center}
\end{example}

\paragraph*{Kleene iteration.}

The simplest algorithm to solve energy games entails a simple Kleene
iteration for a fixpoint function
$\hat{\mathcal{E}}\colon \{ 0,\dots,k,\top\}^V \to \{
0,\dots,k,\top\}^V$ (where we introduce a top elements as discussed at
the end of \S\ref{se:eg-intro}).
In particular we define $g^{(0)} = 0$ (i.e., $g^{(0)}(v)=0$ for all $v \in V$),
$g^{(i+1)}= \hat{\mathcal{E}}(g^{(i)})$ until
$g^{(n)} = g^{(n+1)} = \mu \hat{\mathcal{E}}$.

Note that $\{ 0,\dots,k, \top\}$ is a complete lattice, thus a
transformation of $\Gamma$ to $\Gamma'$ with finite values is not
required for this algorithm.

\paragraph*{Value iteration.}

Next we briefly discuss a value iteration developed by
\cite{DBLP:journals/fmsd/BrimCDGR11} that resembles a worklist
algorithm used in the context of dataflow analysis. The algorithm
starts with a value function $g=0$ and computes a list $L$ of invalid states
w.r.t.\ $g$, i.e.
\begin{itemize}
\item $v\in V_0$ invalid iff $w(v,v') +g(v)-g(v') < 0$ for all $(v,v')\in E$
\item $v\in V_1$ invalid iff $w(v,v') +g(v)-g(v') < 0$ for some $(v,v')\in E$
\end{itemize}
Now, we iteratively pick any state $v\in L$ and increase $g(v)$ until
$v$ is valid. This may produce new invalid states which need to be
added to $L$. The algorithm terminates when there are no more invalid
states, i.e.\ $L=\emptyset$, and we obtain
$g = \mu \bar{\mathcal{E}}$.

\begin{example}
  \label{ex:EG3}
  We continue with Example~\ref{ex:EG3a} and solve the transformed
  energy game via the presented algorithms which compute
  $\mu \mathcal{E}'$:
  \begin{itemize}
  \item \emph{Kleene iteration:} For the given example Kleene
    iteration requires $20$ steps, since the value of $u$ is increased
    by $1$ in each iteration and the solution for $u$ is in fact $20$.
    Only after 20 iteration, $u$ ``realises'' that it must choose $s$
    as its successor and thus the least fixpoint is reached.
  \item \emph{Value iteration:} Initially $L=\{ u,v\}$. Now, we choose
    $v\in L$ and make it valid, thus $g(v) = 3$. Now, $L=\{ u \}$
    since no other state is added. Next, we increase $g(u)$ to $20$ to
    make $u$ valid and the iteration terminates since $L=\emptyset$.
\item \emph{Strategy iteration for Player~0:} Assume $C_0 (u) = C_0
  (v)  = s$. Now, we can solve the linear program (or determine the
  solution by value iteration):
  \begin{align*}
    &\min ~g(u)+g(v)+g(z)+g(s) \\
    g(u) &\geq g(s) - (-20),~g(v) \geq g(s) - (-20),~g(s) = 0,~g(z) \geq g(v) - 4, \\
    0 &\leq g(u),g(v),g(z),g(s) .
  \end{align*}
  This yields $g^{(0)} (s) = 0$, $g^{(0)}(u) = g^{(0)}(v) = 20$ and
  $g^{(0)}(z) = 16$. Next, the strategy is updated and we obtain
  $C_1(u) = s$ and $C_1 (v) = z$. Now we have optimal strategies and
  we reach $\mu \mathcal{E} = \mu \mathcal{E}_{C_1}$ (e.g., by solving
  the resulting linear program).

\item \emph{Strategy iteration for Player~1:} It is evident that there
  is only one strategy for Player~1, i.e.\ $C_0 (z) = v$. Thus, we
  immediately obtain the solution
  $ \mu \mathcal{E}=\mu \mathcal{E}_{C_0}$ via Kleene or value
  iteration (as
  above). 
  \end{itemize}
\end{example}

\paragraph*{Runtime results.}

We now compare runtimes for each algorithm by generating random games
using the Erd{\H o}s\text{-}R{\'e}nyi random graph model. The number
of states in our system is denoted by $n$, while $p$ represents the
probability that any edge $(u,v)$ exists, i.e.\ given states $u$ and
$v$, the probability that the edge $(u,v)$ exists is given by
$p$.
We guarantee at least one outgoing edge for each state.
If an edge exists, its weight is some uniformly random integer in
$[-W,W]$ where $W$ is the maximum edge weight.

Our runtime-tables use the following abbreviations:
\begin{itemize}
\item \textbf{TF}: transformation of $\Gamma$ to $\Gamma'$
\item \textbf{KLE}: Kleene iteration 
\item \textbf{VI}: Value Iteration 
\item \textbf{SI0}: Strategy Iteration for Player 0 (iteration from above)
\item \textbf{SI1}: Strategy Iteration for Player 1 (iteration from below)
\end{itemize}
We use value iteration to compute the $\mu \mathcal{E}_{C_i}$ in both strategy iterations given any strategy $C_i$. This is more efficient than using linear programming in \textbf{SI0}. 

Cumulative runtimes (in seconds) are given for 1000 random runs for $W=n$ and $p=\frac{2}{n}$, where each state randomly belongs to Player~0 or Player~1. A deviation with respect to $W$ and $p$ barely impacts the runtime comparison. 

First, we examine random systems of exclusively finite values:
\begin{center}
\scalebox{1.2}{
\begin{tabular}{|c||c|c|c|c|}
\hline 
 &  \textbf{KLE} & \textbf{VI}   & \textbf{SI1} & \textbf{SI0}   \\ 
\hline \hline
$n=20$  & 0.04 & 0.02 & 0.07 & 0.27  \\ 
\hline 
$n=40$ & 0.13 & 0.05 & 0.19 & 1.88   \\ 
\hline 
$n=80$ &  0.59 &0.17 & 0.74 & 14.79   \\ 
\hline 
\end{tabular} 
}
\end{center}
It is rather clear that approaching from above (\textbf{SI0}) does not seem fruitful since values are usually rather small. Note that \textbf{VI} performs the best.

Next, we examine random systems where infinite values are allowed. Hence, we need to transform these systems to systems with finite values (\textbf{TF}) beforehand and then apply our algorithms to these reduced system. We note that around every second state requires an infinite initial energy in the original systems:
\begin{center}
\scalebox{1.2}{
\begin{tabular}{|c||c|c|c|c|c|}
\hline 
 & \textbf{TF}&  \textbf{KLE} & \textbf{VI}   & \textbf{SI1} & \textbf{SI0}   \\ 
\hline \hline
$n=20$ & 0.05 & 0.33 & 0.19 & 0.48 & 0.1  \\ 
\hline 
$n=40$ & 0.19& 2.6 & 0.79 & 2.5 & 0.3  \\ 
\hline 
$n=80$& 1.07 & 26.7 &4.96 & 19.41 & 0.98   \\ 
\hline 
\end{tabular} 
}
\end{center}
It is not rare that a handful of states attain a value of around
$2\cdot n \cdot W$. Thus, \textbf{SI0} is rather efficient in this
instance when choosing the sink state $s$ as the initial successor for
each state in $V_0'$ (this strategy also guarantees finite
values). Here, \textbf{SI0} is very competitive, even compared to
\textbf{VI}. We however note that \textbf{SI1} is the only algorithm
which produces also an optimal strategy for Player~1.

\end{toappendix}

\section{Application: behavioural metrics for probabilistic automata}
\label{se:BM}

In this section we show how our technique can be used to compute behavioural distances
over probabilistic automata. After introducing the necessary notions, we provide a min-decomposition of the corresponding function. The algorithm that we obtain by instantiating our generalised strategy iteration from above using such min-decomposition can be seen to be essentially the same as the one presented in~\cite{bblmtb:prob-bisim-distance-automata-journal}.

\begin{definition}[probabilistic automaton]
  A \emph{probabilistic automaton} (PA) is a tuple
  $\mathcal{A}=(S,L,\delta,\ell)$ consisting of a nonempty finite set $S$ of
  states, a finite set of labels $L$, a successor function
  $\delta : S \to \Powf{\mathcal{D}(S)}$ and a labeling function
  $\ell \colon S\to L$. 
\end{definition}

The idea is that from a state $s$, one can non-deterministically move
to one of the probability distributions in $\delta(s)$.

The behavioural distance function is defined by combining Hausdorff
and Kantorovich liftings for the nondeterministic and probabilistic
parts, respectively. Recall that the \emph{Kantorovich lifting}
\cite{v:optimal-transport} $K:  [0,1]^{Y \times Y} \to \interval{0}{1}^{\mathcal{D}(Y) \times \mathcal{D}(Y) }$ transforming a pseudometric $d$ on $Y$ to a pseudometric on $\mathcal{D}(Y)$ is defined, for
$\beta, \beta' \in \mathcal{D}(Y)$, by

\begin{center}
  $K(d)(\beta,\beta') = \min_{\omega \in \Omega(\beta,\beta')} \sum_{y,y' \in Y}
  d(y,y') \cdot \omega(y,y')$,
\end{center}
where $\Omega(\beta,\beta')$ is the set of probabilistic couplings of $\beta,\beta'$:
\begin{center}
  $\Omega(\beta,\beta') = \{ \omega \in \mathcal{D}(Y \times Y) \mid
  \forall y,y' \in Y\colon \sum_{x' \in Y}\omega(y,x')
  = \beta(y) \mathop{\land} \sum_{x \in Y}\omega(x,y') = \beta'(y')
  \}$
\end{center}
Actually, the minimum is reached in one of the finitely many vertices
of the polytope $\Omega(\beta,\beta')$, a set which we denote by
$\Omega_V(\beta,\beta')$. The \emph{Hausdorff lifting} $H:  [0,1]^{Y \times Y} \to \interval{0}{1}^{\Pow{Y} \times \Pow{Y} }$
(in the variant
of \cite{m:wasserstein}) is defined, for $X, X' \in \Pow{Y}$, by
\begin{center}
  $H(d)(X,X') = \min_{R \in \mathcal{R}(X,X')} \max_{(x,x') \in R}
  d(x,x')$,
\end{center}
with $\mathcal{R}(X,X') = \{ R \in \Pow{Y \times Y} \mid \pi_1(R) = X \land \pi_2(R)=X'\}$ the set-couplings of $X,X'$~\cite{m:wasserstein}.

The rough idea is the following.  If two states $s$ and $t$ have
different labels they are at distance $1$. Otherwise, in order to
compute their distance one has find a ``best match'' between the
outgoing transitions of such states, i.e., a set coupling as those
considered in the Hausdorff lifting $H$. In turn, since, transitions
are probabilistic, matching transitions means finding an optimal
probabilistic coupling, as done by the Kantorovich lifting $K$, which
is intuitively  the best transport plan balancing the
``supply'' $\beta$ and the ``demand'' $\beta'$. In this way the
distance of $s$ and $t$ is expressed in terms of the distance of the
states they can reach, hence, formally, behavioural distance is
characterised as a least fixpoint.

\begin{definition}[behavioural distance]
  Let $\mathcal{A}=(S,L,\delta,\ell)$ be a PA. The
  behavioural distance on $\mathcal{A}$ is the least fixpoint of
  $\mathcal{M} : [0,1]^{S\times S} \to [0,1]^{S\times S}$ defined, for
  $d\in [0,1]^{S\times S}$ and $s,t\in S$, by
  $\mathcal{M}(d)(s,t) = H(K(d))(\delta(s),\delta(t))$ if
  $\ell(s) = \ell(t)$ and $\mathcal{M}(d)(s,t) = 1$, otherwise.
\end{definition}


\begin{figure}
    \begin{center}
    \scalebox{0.85}{
    \begin{tikzpicture}	
      \node (s) at (0,0) [rectangle,draw] {\begin{tabular}{c}
          $s:a$
        \end{tabular} };
        
      \node (t) at (6,0) [rectangle,draw] {\begin{tabular}{c}
          $t:a$
        \end{tabular} };
        
      \node (u) at (3,-3) [rectangle,draw] {\begin{tabular}{c}
          $u:b$
        \end{tabular} };
        
      \node (b2) at (3,0.5) [color=blue, draw, label=\textcolor{blue}{$\beta_2$}] {};
      \node (b1) at (1.5,-1.5) [color=purple, draw, label=\textcolor{purple}{$\beta_1$}] {};
      \node (b2p) at (3,-0.5) [color=blue, draw, label=\textcolor{blue}{$\beta_2'$}] {};
      \node (b1p) at (4.5,-1.5) [color=purple, draw, label=\textcolor{purple}{$\beta_1'$}] {};
        
      \node (bpp) at (1,-3) [draw, label=$\beta''$] {};
        
      \draw[->,   blue] (s) to node [below]  {} (b2);      
      \draw[->,   blue] (b2) to node [above]  {1} (t); 
      
      \draw[->] (u) to node [above]  {} (bpp);      
      \draw[->,bend right] (bpp) to node [below]  {1} (u); 
      
      \draw[->,   blue] (t) to node [above]  {} (b2p);      
      \draw[->,   blue] (b2p) to node [below]  {1} (s);   
        
      \draw[->,   purple] (s) to node [right]  {} (b1);      
      \draw[->,   purple] (b1) to node [left=0.1]  {$\onehalf$} (u);  
      \draw[->,   purple,bend left] (b1) to node [below=0.1]  {$\onehalf$} (s); 
      
      \draw[->,   purple] (t) to node [left]  {} (b1p);      
      \draw[->,   purple] (b1p) to node [right=0.1]  {$\onehalf$} (u);  
      \draw[->,   purple,bend right] (b1p) to node [below=0.1]  {$\onehalf$} (t);    

    \end{tikzpicture}
    }
  \end{center}

  \caption{A probabilistic automaton.}
  \label{fi:probabilistic}
\end{figure}

\begin{example}
  Consider the probabilistic automaton in Fig.~\ref{fi:probabilistic} with state space
  $Y=\{s,t,u\}$, labels $\ell(s) = \ell (t) = a$ and $\ell(u)=b$ and
  probability distributions
  $\beta_1,\beta_2,\beta'_1,\beta'_2,\beta''$ as indicated. For
  instance, from state $s$, there are two possible transitions
  $\beta_1$ which with probability $\onehalf$ goes to $u$ and with
  probability $\onehalf$ stays in $s$, and $\beta_2$ which goes to $t$ with
  probability $1$.

  In order to explain how function $\mathcal{M}$, resulting from the
  combination of Hausdorff and Kantorovich lifting, works, let us
  consider the pseudometric $d(s,t) = \nicefrac{1}{2}$,
  $d(s,u) = d(t,u) = 1$. This is not the least fixpoint, since the
  distance of states $s,t$ is clearly $0$ as the two states exihibit
  the same behaviour.

  We now illustrate how to compute $\mathcal{M}(d)(s,t)$. We obtain
  $\mathcal{M}(d)(s,u) = \mathcal{M}(d)(t,u) = 1$ and, since
  $\ell(s) = \ell(t) = a$, we have
  \[
    \mathcal{M}(d)(s,t) = H(K(d))(\delta(s),\delta(t)).
  \]
  where $\delta(s) = \{ \beta_1, \beta_2\}$ and
  $\delta(t) = \{ \beta_1', \beta_2'\}$.

  It is relatively straightforward to see that the vertices of the
  coupling polytope $\Omega(\beta_1,\beta'_1)$ are
  $\Omega_V(\beta_1,\beta'_1) = \{ \omega_1,\omega_2\}$ with
  \[
    \omega_1(s,t) = \nicefrac{1}{2},~\omega_1(u,u) = \nicefrac{1}{2}
    \quad \text{ and } \quad
    \omega_2(s,u) = \nicefrac{1}{2},~\omega_2(u,t) = \nicefrac{1}{2}
  \]
  and $\omega_i (x,y) = 0$, $i \in \{1,2\}$, for every other pair
  $(x,y)\in Y\times Y$.
  Then the Kantorovich lifting is determined as follows:
  \[
    K(d)(\beta_1,\beta'_1)  = \min \{ \sum_{x,y\in S} d(x,y) \cdot \omega_1(x,y), \sum_{x,y\in S} d(x,y) \cdot \omega_2(x,y)\} = \min\{ \nicefrac{1}{4}, 1 \} = \nicefrac{1}{4}.
  \]

  Similarly we can obtain $K(d)(\beta_1,\beta'_2)= \nicefrac{1}{2}$,
  $K(d)(\beta_2,\beta'_1)= \nicefrac{1}{2}$, $K(d)(\beta_2,\beta'_2)= \nicefrac{1}{4}$.

  In order to conclude the computation via the Hausdorff lifting,
  note that the minimal set-couplings of
  $\delta(s)=\{\beta_1,\beta_2\}$ and
  $\delta(t)=\{\beta'_1,\beta'_2\}$ are
  \[
    R_1 = \{ (\beta_1,\beta'_1),(\beta_2,\beta'_2)\}
    \qquad
    R_2 = \{ (\beta_1,\beta'_2),(\beta_2,\beta'_1)\}
  \]
  and any other set-coupling includes $R_1$ or $R_2$.
  Then we obtain
  \begin{align*}
    \mathcal{M}(d)(s,t)
    &={H}({K}(d))(\delta(s),\delta(t)) \\
    &= \min \{ \max_{(x,x')\in R_1} {K}(d)(x,x'), \max_{(x,x')\in R_2} {K}(d)(x,x')\} \\
    &= \min \{ \max \{ {K}(d)(\beta_1,\beta'_1),{K}(d)(\beta_2,\beta'_2)\}, \max \{ {K}(d)(\beta_1,\beta'_2),{K}(d)(\beta_2,\beta'_1)\} \}\\
    &= \min \{ \max \{ \nicefrac{1}{4},\nicefrac{1}{4}\}, \max \{ \nicefrac{1}{2},\nicefrac{1}{2}\} \} 
      = \min \{ \nicefrac{1}{4},\nicefrac{1}{2}\} = \nicefrac{1}{4}.
  \end{align*}
\end{example}

In order to cast this problem in our framework, we identify a suitable
min-decomposition of $\mathcal{M}$. Observe that, for
$d\in [0,1]^{S\times S}$ and $s,t\in S$ such that $\ell(s)=\ell(t)$,
expanding the definitions of the liftings and taking advantage of
complete distributivity, we have
\begin{align*}
  \mathcal{M}(d) (s,t) &
  = \min\limits_{R \in \mathcal{R}(\delta(s),\delta(t))} \max\limits_{(\beta,\beta')\in R} \min\limits_{\omega \in \Omega_V(\beta,\beta')} \sum\limits_{u,v\in S} d(u,v) \cdot \omega(u,v)\\
  & = \min_{R\in \mathcal{R}(\delta(s),\delta(t))} \min_{f\in F_R}
  \max_{(\beta,\beta')\in R} \sum\limits_{u,v\in S} d(u,v) \cdot
  f(\beta,\beta')(u,v)
\end{align*}
where
$F_R=\{ f\colon R \to \mathcal{D}(S\times S)\mid f(\beta,\beta') \in
\Omega_V(\beta,\beta') \mbox{ for } (\beta,\beta') \in R\}$, which is a finite
set.



We can thus define a min-decomposition $H_{\min}$ for $\mathcal{M}$ (see Definition~\ref{de:min-decomposition}) such that 
$\mathcal{M}(d) (s,t) = \min_{h\in H_{\min}(s,t)} h(d)$ for all $s, t \in S$.

\begin{definition}[min-decomposition of $\mathcal{M}$]
  \label{de:pa-min-decomposition}
  Let $\mathcal{A}=(S,L,\delta,\ell)$ be a PA. We
  denote by $H_{\min}$ the min-decomposition of $\mathcal{M}$ defined
  as follows. For $s, t \in S$ such that $\ell(s)=\ell(t)$, we let
  $H_{\min}(s,t) = \{ h_{R,f} \mid R\in
  \mathcal{R}(\delta(s),\delta(t)), f\in F_R \}$, with
  $h_{R,f} : [0,1]^{S \times S} \to [0,1]$ defined as
  \begin{center}
    $h_{R,f}(d) = \max_{(\beta,\beta') \in R} \sum\limits_{u,v\in S} d(u,v) \cdot f(\beta,\beta')(u,v)$.
  \end{center}
  If instead $\ell(s) \neq \ell(t)$, we let
  $H_{\min}(s,t) = \{ h_1 \}$ where $h_1(d) = 1$ for all $d$.
\end{definition}

A strategy $C$ in $H_{\min}$ maps each pair of states $s,t\in S$ to a function in $H_{\min}(s,t)$, that is
\begin{itemize}
\item if $\ell (s) \neq \ell(t)$, to the unique element $h_1 \in H_{\min}(s,t)$;
\item if $\ell (s) = \ell(t)$ to some $h_{R,f} \in H_{\min}(s,t)$,
  where $R \in \mathcal{R}(\delta (s),\delta(t))$ is a set-coupling
  and $f\in F_R$.
\end{itemize}

The decomposition above can be used to deduce that $\mathcal{M}$ is non-expansive and thus we can safely
instantiate the algorithm in
Fig.~\ref{fi:alg-min-fix-above} to compute the least fixpoint from
above.  The
resulting algorithm is quite similar to the one specifically developed
for PAs in~\cite{bblmtb:prob-bisim-distance-automata-journal}. In particular, it can
be seen that, apart from the different presentation, a strategy $C$
corresponds to what~\cite{bblmtb:prob-bisim-distance-automata-journal}
refers to as a \emph{coupling structure}.
In addition, the step in item~(\ref{fi:alg-min-fix-above:2c}) of the
algorithm (see Fig.~\ref{fi:alg-min-fix-above}) is analogous to that
in~\cite{bblmtb:prob-bisim-distance-automata-journal}. In fact, in
order to check whether the fixpoint obtained with the current strategy
$C_i$, i.e.\ $\mu \mathcal{M}_{C_i}$, is the least fixpoint of
$\mathcal{M}$, one considers the approximation
$\mathcal{M}_\#^{\mu \mathcal{M}_{C_i}}$ and checks whether its
greatest fixpoint is empty. Recalling that the post-fixpoints of
$\mathcal{M}_\#^{\mu \mathcal{M}_{C_i}}$ have been shown
in~\cite{bekp:fixpoint-theory-upside-down} to be the self-closed
relations of~\cite{bblmtb:prob-bisim-distance-automata-journal}, one
derives that verifying the emptiness of the greatest fixpoint of
$\mathcal{M}_\#^{\mu \mathcal{M}_{C_i}}$ corresponds exactly to
checking whether the largest self-closed relation is empty (see 
Appendix~\ref{app:PA} for more details).

\medskip

\begin{toappendix}
\label{app:PA}
The following lemma is helpful in the instantiation of the algorithm, when we need to construct a new strategy.

\begin{lemma}
  \label{le:improve-pa}
  Let $\mathcal{A}=(S,L,\delta,\ell)$ be a PA
  and let $H_{\min}$ be the min-decomposition of $\mathcal{M}$ in
  Definition~\ref{de:pa-min-decomposition}. Given a strategy $C$ in $H_{min}$
  and $d : Y \times Y \to [0,1]$, a strategy $C'(y) =\arg\min_{h\in H_{\min}(y)} h(d)$
  can be defined as follows: for
  $(s,t) \in S \times S$
  \begin{itemize}
  \item if $\ell (s) \neq \ell(t)$ then $C'(s,t) = C(s,t)$
  \item if $\ell (s) = \ell(t)$ then $C'(s,t) = h_{R',f'}$ where
    \begin{align*}
      R'
      & = {\arg\min}_{R\in \mathcal{R}(\delta(s),\delta(t))} \max_{(\beta,\beta')\in R} K(d)(\beta,\beta')
    \end{align*}
    and for $(\beta,\beta')\in R'$:
    \[
      f'(\beta,\beta') = \arg\min_{\omega\in \Omega_V(\beta,\beta')}
      \sum_{u,v\in S} d(u,v) \cdot \omega(u,v).
    \]
  \end{itemize}
\end{lemma}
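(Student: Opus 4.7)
The plan is to verify, for each pair $(s,t) \in S \times S$, that the element $C'(s,t)$ prescribed by the statement indeed realises $\min_{h \in H_{\min}(s,t)} h(d)$, splitting the argument along the two cases of the label test.

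The case $\ell(s) \neq \ell(t)$ is immediate: by Definition~\ref{de:pa-min-decomposition}, $H_{\min}(s,t) = \{h_1\}$ is a singleton, so any choice (in particular $C'(s,t) = C(s,t) = h_1$) trivially attains the minimum.

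For the nontrivial case $\ell(s) = \ell(t)$, the key observation is that the minimisation over $H_{\min}(s,t) = \{ h_{R,f} \mid R \in \mathcal{R}(\delta(s),\delta(t)),\, f\in F_R\}$ decouples into an outer minimisation over the set-coupling $R$ and an inner minimisation over $f \in F_R$. First I would fix $R$ and observe that, because $F_R = \prod_{(\beta,\beta')\in R} \Omega_V(\beta,\beta')$ is a direct product, one can minimise each coordinate $f(\beta,\beta')$ independently inside the max:
\begin{align*}
\min_{f \in F_R} h_{R,f}(d)
&= \min_{f \in F_R} \max_{(\beta,\beta')\in R} \sum_{u,v\in S} d(u,v)\cdot f(\beta,\beta')(u,v) \\
&= \max_{(\beta,\beta')\in R} \min_{\omega \in \Omega_V(\beta,\beta')} \sum_{u,v\in S} d(u,v)\cdot \omega(u,v) \\
&= \max_{(\beta,\beta')\in R} K(d)(\beta,\beta').
\end{align*}
The last equality uses that the linear objective of the Kantorovich lifting attains its minimum at a vertex of the polytope $\Omega(\beta,\beta')$, hence the restriction to $\Omega_V(\beta,\beta')$ does not change the value. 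This justifies choosing $f'(\beta,\beta') \in \arg\min_{\omega\in \Omega_V(\beta,\beta')} \sum_{u,v}d(u,v)\omega(u,v)$ for every $(\beta,\beta') \in R'$.

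Finally, taking the minimum over $R$ yields
\[
\min_{h\in H_{\min}(s,t)} h(d) \;=\; \min_{R \in \mathcal{R}(\delta(s),\delta(t))} \max_{(\beta,\beta')\in R} K(d)(\beta,\beta'),
\]
which is achieved exactly by $R'$ as defined in the statement, and therefore by $h_{R',f'}$. The main subtlety (not really an obstacle, but the step that deserves care) is the swap of $\min_{f\in F_R}$ and $\max_{(\beta,\beta')\in R}$: it is valid here not by any general minimax theorem but by the simple product structure of $F_R$ together with the fact that the $\max$ of finitely many values is minimised componentwise by minimising each value. Combining the two cases gives the stated $C'$.
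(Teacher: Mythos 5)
Your proposal is correct and follows essentially the same route as the paper's proof: handle the unequal-label case trivially, then for equal labels choose each probabilistic coupling $f'(\beta,\beta')$ as a pointwise minimiser (justified by the product structure of $F_R$, which lets the inner $\min_{f\in F_R}$ commute with the $\max$ over $R$) and finally minimise over set-couplings, recovering $\min_R \max_{(\beta,\beta')\in R} K(d)(\beta,\beta')$. Your explicit remark on why the min/max swap is valid is a slightly more careful phrasing of the step the paper states informally, but the substance is identical.
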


\begin{proof}
  Let $(s,t) \in S \times S$. If $\ell (s) \neq \ell(t)$ then
  $H_{\min}(s,t) = \{ h_1\}$, hence the only possible choice is
  $C'(s,t) = h_1 = C(s,t)$.

  If instead $\ell (s) = \ell(t)$ then $C'(s,t) = h_{R',f'}$ is chosen
  in a way that minimises
  \begin{align}
    \label{eq:f}
    h_{R',f'}(d) = \max_{(\beta,\beta')\in R'}
    \sum_{u,v\in S} d(u,v) \cdot f'(\beta,\beta')(u,v)
  \end{align}
  
  In order to minimise the value above, whatever $R'$ will be, for all
  $(\beta,\beta')\in R'$ the choice of $f'(\beta,\beta')$ should minimise
  $\sum_{u,v\in S} d(u,v) \cdot \omega(u,v)$.
  Formally, we can define
  $F : \mathcal{D}(S)\times \mathcal{D}(S)\to \mathcal{D}(S\times S)$
  as
  \[
    F(\beta,\beta') = \arg\min_{\omega\in \Omega_V(\beta,\beta')}
    \sum_{u,v\in S} d(u,v) \cdot \omega(u,v).
  \]
  
  Then the set-coupling
  $R'$ can be
  \begin{align*}
    R'
    &
    = {\arg\min}_{R\in \mathcal{R}(\delta(s),\delta(t))} \max_{(\beta,\beta')\in R}  \sum_{u,v\in S} d(u,v) \cdot F'(\beta,\beta')(u,v)\\
    &
      = {\arg\min}_{R\in \mathcal{R}(\delta(s),\delta(t))} \max_{(\beta,\beta')\in R} \min_{\omega \in \Omega_V(\beta,\beta')} \sum_{u,v\in S} d(u,v) \cdot \omega(u,v)\\
    &  = {\arg\min}_{R\in \mathcal{R}(\delta(s),\delta(t))} \max_{(\beta,\beta')\in R} K(d)(\beta,\beta')
  \end{align*}
  and finally we can define $f' = F_{|R'}$, i.e., explicitly, for all $(\beta, \beta') \in R'$
    \[
    f'(\beta,\beta') = \arg\min_{\omega\in \Omega_V(\beta,\beta')}
    \sum_{u,v\in S} d(u,v) \cdot \omega(u,v).
  \]
  as desired.
\end{proof}

The algorithm starts by fixing a strategy $C_0$
(item~(\ref{fi:alg-min-fix-above:1})). Then, at each iteration, if
$\minimp{C_{i}} \neq \emptyset$ (item~(\ref{fi:alg-min-fix-above:2b})
which by Lemma~\ref{lem:iff-above}, can be checked by verifying if
$\mathcal{M}(\mu \mathcal{M}_{C_i}) \sqsubset \mu \mathcal{M}_{C_i})$,
we consider a new strategy $C_{i+1} \in \minimp{C_{i}}$. According to
Remark~\ref{rem:min-imp}, this can be defined as follows: for
$(s,t) \in S \times S$
\begin{itemize}
\item if $\ell (s) \neq \ell(t)$ then $C_{i+1}(s,t) = C_i(s,t)$
\item if $\ell (s) = \ell(t)$ then $C_{i+1}(s,t) = h_{R',f'}$ chosen
  in a way that minimises $h_{R',f'}(\mu
  \mathcal{M}_{C_i})$. Concretely (see Lemma~\ref{le:improve-pa}), one can define
  \begin{align*}
    R'
    & = {\arg\min}_{R\in \mathcal{R}(\delta(s),\delta(t))} \max_{(\beta,\beta')\in R} K(\mu \mathcal{M}_{C_i})(\beta,\beta')
\end{align*}
and for $(\beta,\beta')\in R'$:
  \[
    f'(\beta,\beta') = \arg\min_{\omega\in \Omega_V(\beta,\beta')}
    \sum_{u,v\in S} \mu \mathcal{M}_{C_i}(u,v) \cdot \omega(u,v).
  \]
\end{itemize}

If instead, $\minimp{C_{i}} = \emptyset$
(item~(\ref{fi:alg-min-fix-above:2c})) and thus, by
Lemma~\ref{lem:iff-above}, $\mu \mathcal{M}_{C_i}$ is a fixpoint of
$\mathcal{M}$, we check whether it is the least fixpoint by verifying
if $\nu \mathcal{M}_\#^{\mu \mathcal{M}_{C_i}} =\emptyset$
(Theorem~\ref{th:fixpoint-sound-compl}), and in case it is not, we use
Lemma~\ref{lem:fp-increase} to determine a pre-fixpoint
$a \sqsubset \mu \mathcal{M}_{C_i}$, which is then used to obtain
$C_{i+1}$.
Everything works since $\mathcal{M}$ is monotone and non-expansive
(its approximation $\mathcal{M}_\#^d$ is spelled out in~\cite{bekp:fixpoint-theory-upside-down}). Furthermore
$\mu \mathcal{M}_{C_i}$ is again obtained by linear programming,
similar to the case of simple stochastic games.

\medskip

The resulting algorithm is quite similar to the one specifically
developed for PAs
in~\cite{bblmtb:prob-bisim-distance-automata-journal}. In particular, apart from the different presentation, a strategy $C$ corresponds
to what~\cite{bblmtb:prob-bisim-distance-automata-journal} refers as a \emph{coupling structure}.

More in detail, in our case, a strategy $C$ in $H_{\min}$ maps each
pair of states $s,t\in S$ with $\ell (s) = \ell(t)$ to some
$h_{R,f} \in H_{\min}(s,t)$, where
$R \in \mathcal{R}(\delta (s),\delta(t))$ is a set-coupling and
$f\in F_R$ maps each $(\beta,\beta') \in R$ to a probabilistic
coupling. Note that the choice of the probabilistic couplings is
``local'', i.e., we could have different pairs of states
$(s,t), (u,v) \in S \times S$ and $C(s,t) = (R, f)$,
$C(u,v) = (R',f')$, with $(\beta,\beta') \in R \cap R'$ and
$f(\beta,\beta') \neq f'(\beta,\beta')$.
%
However, it is easy to see that
we can assume (and it is computationally convenient to do so) that the
choice of the probabilistic coupling is actually ``global'', i.e.,
that for a strategy $C$, there is a (partial) function
$F : \mathcal{D}(S)\times \mathcal{D}(S)\to \mathcal{D}(S\times S)$
such that for each $(s,t) \in S \times S$ we have
$C(s,t) = (R, F_{|R})$.  In this view, a strategy $C$ can be
identified with a pair $(\rho, F)$, where $\rho$ gives the
set-couplings, i.e.,
$\rho(s,t) \in \mathcal{R}(\delta(s), \delta(t))$, and
$F\colon \mathcal{D}(S)\times \mathcal{D}(S)\to \mathcal{D}(S\times
S)$ the probabilistic couplings. This exactly corresponds to the
notion of \emph{coupling structure}
in~\cite{bblmtb:prob-bisim-distance-automata-journal}.

Also the step in item~(\ref{fi:alg-min-fix-above:2c}) is analogous to
that in~\cite{bblmtb:prob-bisim-distance-automata-journal}. In fact,
the post-fixpoints of $\mathcal{M}_\#^{\mu \mathcal{M}_{C_i}}$ have been
shown in~\cite{bekp:fixpoint-theory-upside-down} to be the self-closed relations
of~\cite{bblmtb:prob-bisim-distance-automata-journal}
and thus verifying the emptiness of the greatest fixpoint corresponds
exactly to checking whether the largest self-closed relation is empty.

A difference concerns how strategy updates are performed.  While in
the algorithm derived above all set-couplings are updated at the same
time ($\rho$-component),
in~\cite{bblmtb:prob-bisim-distance-automata-journal} the set-coupling
is updated only for a single pair of states.
%
Since the ``local'' update produces a min-improvement, also the
algorithm in~\cite{bblmtb:prob-bisim-distance-automata-journal} can be
seen as an instance of the algorithm in
Fig.~\ref{fi:alg-min-fix-above}.
Updating all components can be more expensive, but it might
accelerate convergence. A more precise comparison should be carried
out via an experimental approach.
\end{toappendix}

\section{Conclusion}
\label{se:conclusion}

We developed abstract algorithms for strategy iterations which allow
to compute least fixpoints (or, dually, greatest fixpoints) of
non-expansive functions over MV-algebras. The idea consists in
expressing the function of interest as a minimum (or a maximum), and
view the process of computing the function as a game between players
{\Min} and {\Max} trying to minimise and maximise, respectively, the
outcome. Then the algorithms proceed via a sequence of steps which
converge to the least fixpoint from above, progressively improving the
strategy of player {\Min}, or from below, progressively improving the
strategy of the player {\Max}.
The two procedures have similar worst-case complexity. The number of
iterations is bounded by the number of strategies of the corresponding
player $p \in \{{\min},{\max}\}$, which is exponential in the input
size (the number of strategies is $\prod_{y\in Y}\len{H_p(y)}$). This
suggests that, depending on the setting, the fastest algorithm is the
one using the smaller decomposition $H_{\min}$ respectively
$H_{\max}$. However, a deeper analysis is still needed, as a smaller
decomposition usually leads to a higher cost for computing $\mu f_C$.

The algorithms generalise an approach
which has been recently proposed for simple stochastic games
in~\cite{bekp:fixpoint-theory-upside-down,DBLP:journals/corr/abs-2101-08184}.
We showed how our technique instantiates to energy games, thus giving
a method for determining the optimal strategies of both players,
and to the computation of the
behavioural distance for probabilistic automata, resulting in an
algorithm similar to the non-trivial procedure
in~\cite{bblmtb:prob-bisim-distance-automata-journal}, which was also
a source of inspiration.

Strategy iteration is used in many different application
domains with fairly similar underlying ideas and we believe that it is
fruitful to provide a general definition of the technique, clarifying and solving several issues on
this level, such as the need for stable improvements or ways to deal
with non-unique fixpoints.

There is an extremely wide literature on strategy iteration, often
also referred to as policy iteration or strategy improvement (for an
overview see \cite{GTW:book}). As mentioned in the introduction, after
its use on nonterminating stochastic
games~\cite{hk:nonterminating-stochastic-games}, it has been applied
to solve many kinds of games, including discounted mean-payoff
games~\cite{ZP:CMPGG}, parity games~\cite{VJ:DSIM,Sche:OSMPG} and
simple stochastic games~\cite{condon92}. Several quasi-polynomial
algorithms have been recently devised for parity
games~\cite{CJKLS:DPGQPT,jl:success-progress-measures,l:modal-mu-parity-quasi-polynomial}, while the existence of a polynomial algorithm is still
an important open problem. This has been generalized to finite
lattices by \cite{DBLP:conf/tacas/0001S21}.

Various papers on strategy iteration focus on lower
bounds~\cite{Fried:ELBLD,Fearnley10}. Our paper, rather than
concentrating on complexity issues, provides a general framework
capturing strategy iteration in a general lattice theoretical
setting. A work similar in spirit is~\cite{ABS:GSIM} which proposes a
meta-algorithm GSIA such that a number of strategy improvement
algorithms for SSGs arise as instances, along with a
general complexity bound. Differently from ours, this paper focuses on SSGs and iteration from below. However, it allows for the
parametrisation of the algorithm on a subset of edges of interest in
the game graph, which is not possible in our approach, and so it
can provide interesting suggestions for further generalisations.

Another interesting setting of application is the
lower-weak-upper-bound problem in mean-payoff games~\cite{BFLMS:2008},
reminiscent of energy games. For this problem, differently from the usual
definition, the aim for one player is to maximise, never going
negative, some resource which cannot exceed a given bound, while the
other player has to minimise it.  Also in this case, the solution can
be computed as a least fixpoint.  Due to the upper bound imposed to
the resource, the function is not non-expansive, thus it is not
captured by our theory.  Still, the algorithm KASI proposed
in~\cite{BrimChal:2010}, which computes the solution via strategy
iteration, shares many similarities with our approach from below: at
each iteration the algorithm computes a stable max-improvement of the
current strategy.  Indeed, when applying KASI to the special case
where there is no upper bound to the accumulated resource, called
lower-bound problem in~\cite{BFLMS:2008} (also studied under different
names in~\cite{CdAHS:2003,LifsPavl:2006}), the algorithm comes out as
an exact instantiation of our general strategy iteration from below.

Given their generality, we believe that the algorithms proposed in
the present paper have the potential to be applicable to a variety of other
settings. In particular, some preliminary investigations show their
applicability to computing behavioural metrics in an abstract
coalgebraic setting~\cite{bbkk:coalgebraic-behavioral-metrics}. Here
the behavioural distance is naturally characterised as a least
fixpoint of an operator based on the Wasserstein lifting of the
behavioural functor. Then the idea is to view couplings used in the
computation of the Wasserstein lifting as strategies and use strategy
iteration for converging to the coalgebraic metric.

Our abstract strategy iteration algorithms rely on the assumption that, once a strategy for one of the players is fixed, the optimal ``answering'' strategy for the opponent can be computed efficiently. Identifying abstract settings where
a min- or max-decompositions of a function ensures that
the answering strategy can be indeed computed efficiently (e.g., via linear programming as it happens for simple stochastic games), is an interesting direction of future research.

\bibliographystyle{plain}
\bibliography{references}

\end{document}